\newcommand{\C}{\mathbb{C}}
\newcommand{\E}{\mathbb{E}}
\newcommand{\F}{\mathbb{F}}
\newcommand{\ket}[1]{| #1 \rangle}
\newcommand{\bra}[1]{\langle #1|}
\newcommand{\bracket}[3]{\langle #1|#2|#3 \rangle}
\DeclareMathOperator{\poly}{poly}
\DeclareMathOperator{\diag}{diag}
\DeclareMathOperator{\gap}{gap}
\DeclareMathOperator{\ngap}{ngap}
\DeclareMathOperator{\FBPP}{FBPP}
\DeclareMathOperator{\NP}{NP}
\DeclareMathOperator{\GapP}{GapP}
\newcommand{\be}{\begin{equation}}
\newcommand{\ee}{\end{equation}}
\newcommand{\bea}{\begin{eqnarray}}
\newcommand{\eea}{\end{eqnarray}}
\newcommand{\bes}{\begin{equation*}}
\newcommand{\ees}{\end{equation*}}
\newcommand{\beas}{\begin{eqnarray*}}
\newcommand{\eeas}{\end{eqnarray*}}
\newtheorem*{rep@theorem}{\rep@title}
\newcommand{\newreptheorem}[2]{%
\newenvironment{rep#1}[1]{%
 \def\rep@title{#2 \ref{##1} (restated)}%
 \begin{rep@theorem}}%
 {\end{rep@theorem}}}
\newtheorem{thm}{Theorem}
\newtheorem*{thm*}{Theorem}
\newtheorem{cor}[thm]{Corollary}
\newtheorem{con}[thm]{Conjecture}
\newtheorem{lem}[thm]{Lemma}
\newtheorem*{lem*}{Lemma}
\newtheorem{prop}[thm]{Proposition}
\newtheorem{fact}[thm]{Fact}
\begin{document}

\title{Average-case complexity versus approximate simulation of commuting quantum computations}

\author{Michael J. Bremner}
\affiliation{Centre for Quantum Computation and Intelligent Systems, \\Faculty of Engineering and Information Technology,
University of Technology Sydney, NSW 2007, Australia}
\email{michael.bremner@uts.edu.au}

\author{Ashley Montanaro}
\affiliation{School of Mathematics, University of Bristol, UK}

\author{Dan J. Shepherd}
\affiliation{CESG, Hubble Road, Cheltenham, GL51 0EX, UK}

%\begin{center}
%{\Large \bf Average-case complexity versus approximate simulation of commuting quantum computations}\\
%\bigskip
%{\normalsize Michael J. Bremner$^1$, Ashley Montanaro$^2$ and Dan J. Shepherd$^3$}\\
%\bigskip
%{\small\it $^1$Centre for Quantum Computation and Intelligent Systems,\\
%Faculty of Engineering and Information Technology\\
%University of Technology Sydney, NSW 2007, Australia\\[1mm]
%$^2$Department of Computer Science, University of Bristol, UK\\[1mm]
%$^3$CESG, Hubble Road, Cheltenham, GL51 0EX, U.K.\\}
%\today
%\end{center}

\begin{abstract}
We use the class of commuting quantum computations known as IQP (Instantaneous Quantum Polynomial time) to strengthen the conjecture that quantum computers are hard to simulate classically. We show that, if either of two plausible average-case hardness conjectures holds, then IQP computations are hard to simulate classically up to constant additive error. One conjecture relates to the hardness of estimating the complex-temperature partition function for random instances of the Ising model; the other concerns approximating the number of zeroes of random low-degree polynomials. We observe that both conjectures can be shown to be valid in the setting of worst-case complexity. We arrive at these conjectures by deriving spin-based generalisations of the Boson Sampling problem that avoid the so-called permanent anticoncentration conjecture.
\end{abstract}

\maketitle

% ------------------------------------------------------------------------------

%\section{Introduction}

Quantum computers are conjectured to outperform classical computers for a variety of important tasks ranging from integer factorisation~\cite{shor97} to the simulation of quantum mechanics~\cite{georgescu14}. However, to date there is relatively little rigorous evidence for this conjecture. It is well established that quantum computers can yield an exponential advantage in the query and communication complexity models. But in the more physically meaningful model of time complexity, there are no {\em proven} separations known between quantum and classical computation.

This can be seen as a consequence of the extreme difficulty of proving bounds on the power of classical computing models, such as the famous P vs.\ NP problem. Given this difficulty, the most we can reasonably hope for is to show that quantum computations cannot be simulated efficiently classically, assuming some widely believed complexity-theoretic conjecture. For example, any set of quantum circuits that can implement Shor's algorithm~\cite{shor97} provides a canonical example, with the unlikely consequence of efficient classical simulation of this class of quantum circuits being the existence of an efficient classical factoring algorithm. However, one could hope for the existence of other examples that have wider-reaching complexity-theoretic consequences.

With this in mind, in both \cite{bremner11} and \cite{aaronson13} it was shown that the existence of an efficient classical sampler from a distribution that is close to the output distribution of an arbitrary quantum circuit, to within a small multiplicative error in each output probability, would imply that post-selected classical computation is equivalent to post-selected quantum computation. This consequence is considered very unlikely as it would collapse the infinite tower of complexity classes known as the Polynomial Hierarchy~\cite{papadimitriou94} to its third level. In both works this was proven even for non-universal quantum circuit families: commuting quantum circuits in the case of~\cite{bremner11}, and linear-optical networks in~\cite{aaronson13}.  These non-universal families are of physical interest because they are simpler to implement, and easier to analyse because of the elegant mathematical structures on which they are based.

Unfortunately, this notion of approximate sampling is physically unrealistic: it is more reasonable to allow the quantum computer, and its corresponding classical simulator, to sample from a distribution which is close to the quantum circuit's output distribution in total variation distance (or equivalently the $\ell_1$ distance). The results of~\cite{bremner11,aaronson13} have little meaning in this ``additive error'' setting.

%However, in physically realistic scenarios where the quantum computer and its corresponding classical simulator are allowed to be accurate up to a small additive error, these results have little meaning.

One important recent step to addressing this was proposed by Aaronson and Arkhipov \cite{aaronson13}, who gave a sophisticated argument based on counting complexity that approximately sampling from the output probability distribution of a randomly chosen network of noninteracting photons (a problem known as Boson Sampling) should be classically hard, even up to a reasonable \emph{additive} error bound. This major breakthrough rests on two tantalising but as yet unproven conjectures: the so-called \emph{permanent anticoncentration conjecture} and the \emph{permanent-of-Gaussians conjecture}.

%----------------------------------------------------------------------------------

%\subsection{Our results}
%In this paper we propose a generalisation of the Boson Sampling argument of \cite{aaronson13} that is native to spin systems, specifically to the class of commuting quantum circuits known as IQP (Instantaneous Quantum Polynomial time), which was introduced in \cite{shepherd09} and \cite{bremner11}. In our opinion this leads to a mathematically simpler setting, while still apparently retaining the essential complexity-theoretic ingredients. This simplicity allows us to prove the IQP analogues of the \emph{permanent anticoncentration conjecture}. The only remaining conjecture to prove is an IQP analogue of the \emph{permanent-of-Gaussians} conjecture, of which we find two natural examples. 

In this paper we propose a generalisation of the Boson Sampling argument of \cite{aaronson13} that is native to spin systems, specifically to randomly chosen commuting quantum circuits from the class known as IQP (Instantaneous Quantum Polynomial time), which was introduced in \cite{shepherd09} and \cite{bremner11}. In our opinion this leads to a mathematically simpler setting, while still retaining the essential complexity-theoretic ingredients. This simplicity allows us to prove the IQP analogues of the \emph{permanent anticoncentration conjecture}. The only remaining conjecture to prove is an IQP analogue of the \emph{permanent-of-Gaussians} conjecture, of which we find two natural examples.

%Informally (see~\cite{bremner11} for the formal definition) an $n$-qubit IQP circuit $\mathcal{C}$ is a quantum circuit which takes as input the state $|0\rangle^{\otimes n}$, whose gates are diagonal in the Pauli-X basis, and whose $n$-qubit output is measured in the computational basis. We say that a classical sampler of an IQP circuit is accurate up to error $\epsilon$ in $\ell_1$ norm if its output probability distribution has total variation distance at most $\epsilon/2$ from that of $\mathcal{C}$. We can now state the main result of this paper:

Informally (see~\cite{bremner11,shepherd09} for the formal definition) an $n$-qubit IQP circuit $\mathcal{C}$ is a quantum circuit which takes as input the state $|0\rangle^{\otimes n}$, whose gates are diagonal in the Pauli-X basis, and whose $n$-qubit output is measured in the computational basis. It is often convenient to write $\mathcal{C}=H^{\otimes n} \widetilde{\mathcal{C}} H^{\otimes n}$, where $\widetilde{\mathcal{C}}$ is diagonal in the Pauli-Z basis and $H$ is the usual Hadamard gate.  We can now state the main result of this paper:

\begin{thm}
\label{thm:main}
Assume either Conjecture \ref{con:Ising} or \ref{con:deg3} below is true. If it is possible to classically sample from the output probability distribution of any IQP circuit $\mathcal{C}$ in polynomial time, up to an error of 1/192 in $\ell_1$ norm, then there is a $\text{BPP}^{\text{NP}}$ algorithm to solve any problem in $\text{P}^{\#\text{P}}$. Hence the Polynomial Hierarchy would collapse to its third level.\end{thm}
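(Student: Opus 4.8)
The plan is to follow the argument template of Aaronson and Arkhipov~\cite{aaronson13} for Boson Sampling, replacing the permanent by the Gauss sum that governs IQP output probabilities. Write $\mathcal{C}=H^{\otimes n}\widetilde{\mathcal{C}}H^{\otimes n}$ with $\widetilde{\mathcal{C}}\ket{x}=e^{i\theta(x)}\ket{x}$; then the probability of the all-zeros outcome is $p_\mathcal{C}(0^n)=4^{-n}\bigl|\sum_{x\in\{0,1\}^n}e^{i\theta(x)}\bigr|^2$. For the circuit ensembles underlying Conjectures~\ref{con:Ising} and~\ref{con:deg3} the inner sum is respectively a complex-temperature Ising partition function and (up to a constant) the gap $\sum_x(-1)^{f(x)}$ of a uniformly random degree-$\le 3$ polynomial $f$ over $\F_2$; computing it exactly is $\#\mathrm{P}$-hard, and each conjecture asserts that even approximating it well on a constant fraction of random instances would suffice to solve any $\#\mathrm{P}$ problem. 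Against this backdrop I would use the hypothetical $\ell_1$-sampler to build a $\mathrm{BPP}^{\mathrm{NP}}$ algorithm for that approximation task.

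The first ingredient is a hiding/relabelling symmetry. Conjugating $\widetilde{\mathcal{C}}$ by a Pauli $Z^z$ merely adds the linear term $z\cdot x$ to $\theta$, and one checks that $p_{\widetilde{\mathcal{C}}}(v)=p_{Z^z\widetilde{\mathcal{C}}}(v\oplus z)$ for all $v,z$; since appending $Z^z$ amounts to adding weight-one gates, and the ensembles of both conjectures are invariant under this, the ``hard'' outcome can always be moved to $0^n$. I would use this to \emph{symmetrise} the given sampler $A$: let $A'$ pick a uniform $z$, run $A$ on the circuit with $Z^z$ appended, and XOR $z$ into $A$'s output. Then $A'$ still runs in polynomial time and, averaging the per-circuit $\ell_1$ bound over $z$, still has $\ell_1$-error $\le\epsilon=1/192$ on every circuit in the family; moreover its output law $q_\mathcal{C}$ is now equivariant under outcome relabelling, so averaging over the ensemble and invoking the symmetry gives $\E_{\mathcal{C}}\bigl[\,|q_\mathcal{C}(0^n)-p_\mathcal{C}(0^n)|\,\bigr]\le\epsilon/2^n$.

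Next I would apply Stockmeyer's approximate-counting theorem: $q_\mathcal{C}(0^n)$ is the number of internal random strings on which $A'$ outputs $0^n$, divided by a power of two, hence a ratio of a $\#\mathrm{P}$ function to a constant, so an $\mathrm{FBPP}^{\mathrm{NP}}$ machine produces an estimate $\tilde q$ with $|\tilde q-q_\mathcal{C}(0^n)|\le q_\mathcal{C}(0^n)/\poly(n)$. Combining this with the previous paragraph and the (exact) identity $\E_\mathcal{C}[p_\mathcal{C}(0^n)]=2^{-n}$, a triangle inequality gives $\E_\mathcal{C}\bigl[|\tilde q-p_\mathcal{C}(0^n)|\bigr]\le(2\epsilon+o(1))/2^n$, so by Markov's inequality $\tilde q$ is within additive error $O(\epsilon)/2^n$ of $p_\mathcal{C}(0^n)$ for all but a small constant fraction of the ensemble. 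Finally I invoke the IQP anticoncentration bound established in this paper (the analogue of the permanent anticoncentration conjecture), which yields $p_\mathcal{C}(0^n)\ge\Omega(1)/2^n$ for a constant fraction of circuits. On the intersection of these two constant-fraction events the additive estimate $\tilde q$ is a $(1\pm O(\epsilon))$-\emph{multiplicative} approximation to $p_\mathcal{C}(0^n)$, hence to the Ising partition function squared, respectively to the gap-squared of a random cubic polynomial -- exactly the quantity the relevant conjecture rules hard to approximate on average.

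Hence there is an $\mathrm{FBPP}^{\mathrm{NP}}$ algorithm for the approximate average-case counting problem of Conjecture~\ref{con:Ising} (resp.~\ref{con:deg3}); since that problem is $\#\mathrm{P}$-hard by the conjecture, $\mathrm{P}^{\#\mathrm{P}}\subseteq\mathrm{P}^{\mathrm{FBPP}^{\mathrm{NP}}}=\mathrm{BPP}^{\mathrm{NP}}$, and as $\mathrm{BPP}^{\mathrm{NP}}\subseteq\Sigma_3^{\mathrm{p}}$ while Toda's theorem gives $\mathrm{PH}\subseteq\mathrm{P}^{\#\mathrm{P}}$, the polynomial hierarchy collapses to its third level. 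The step I expect to be delicate is the last link in the chain: the anticoncentration lemma only controls a constant fraction of instances, not a $1-o(1)$ fraction, so the conjectures must be phrased to demand hardness on exactly such a fraction, and the explicit constant $\epsilon=1/192$ is what reconciles the loss from symmetrisation, the factor incurred in passing from $|q-p|$ to a relative error, the Markov loss, and the anticoncentration constant. By contrast, the symmetrisation bookkeeping and the precision estimates in Stockmeyer's theorem are routine.
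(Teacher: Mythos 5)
Your proposal is correct and follows essentially the same route as the paper: hiding the target amplitude via random X (equivalently, appended $Z^z$) gates, Stockmeyer counting applied to the sampler's $0^n$-probability, a Paley--Zygmund anticoncentration bound giving $p_{\mathcal{C}}(0^n)=\Omega(2^{-n})$ on a constant fraction of instances, and then the conjectured average-case \#P-hardness plus Toda's theorem to collapse PH. The only difference is minor bookkeeping (you symmetrise the sampler and apply Markov to the ensemble average, whereas the paper applies Markov over outcomes $y$ for a fixed circuit), which yields the same constants $1/192$, $1/4+o(1)$ and $1/24$.
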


Loosely speaking, the complexity class $\text{P}^{\#\text{P}}$ appearing in this theorem is the class of problems that can be solved in polynomial time given the ability to count the number of solutions of arbitrary NP problems~\cite{papadimitriou94}; $\text{BPP}^{\text{NP}}$ is the class of problems that can be solved by randomised classical polynomial-time computation equipped with an oracle that can solve any problem in NP. 

%Theorem \ref{thm:main} is based around two natural conjectures, one of which is native to computer science, and other common in condensed-matter physics. Each conjecture straightforwardly relates to a family of IQP circuits. Essentially the result states that if we assume either of our conjectures, then there is \emph{no way of efficiently simulating these families of quantum circuits} without a major re-evaluation of the existing status-quo of complexity theory. 

Theorem \ref{thm:main} is based around two natural conjectures, one of which is native to computer science, and other common in condensed-matter physics. Both conjectures are concerned with the complexity of computing multiplicative approximations to output probabilities, $|\langle 0 |^{\otimes n}\mathcal{C}|0\rangle^{\otimes n}|^2$, of circuits that are randomly chosen from circuit families within IQP.  We say that $A$ is a \emph{multiplicative approximation} of $X$ to within $\eta$ if $|A-X|\leq \eta X$. Essentially, Theorem \ref{thm:main} states that if we assume either of our conjectures, then there is \emph{no way of classically efficiently sampling the outputs of these families of quantum circuits} without a major re-evaluation of the existing status-quo of complexity theory. 

%The first conjecture is based on the complexity of one of the most commonly studied models of statistical physics, the Ising model. Consider the partition function
%%
%\begin{equation}
%\label{eq:partition}
%Z(\omega) = \sum_{z \in \{\pm1\}^n} \omega^{\sum_{i<j} w_{ij} z_i z_j + \sum_{k=1}^n v_k z_k }, \end{equation}
%%
%where the exponentiated sum is over the complete graph on $n$ vertices, $w_{ij}$ and $v_k$ are real edge and vertex weights, and $\omega \in \C$. Then, for any $\omega = e^{i \theta}$, $Z(\omega)$ arises straightforwardly as an amplitude of some IQP circuit $\mathcal{C}$: $\langle 0|^{\otimes n}\mathcal{C}|0\rangle^{\otimes n} = Z(\omega)/2^n$ (see Appendix \ref{appendix:Ising} and \cite{shepherd10,iblisdir12,ni12,fujii13,goldberg14}). 

The first conjecture is based on the complexity of one of the most commonly studied models of statistical physics, the Ising model. Consider the partition function
\begin{equation}
\label{eq:partition}
Z(\omega) = \sum_{z \in \{\pm1\}^n} \omega^{\sum_{i<j} w_{ij} z_i z_j + \sum_{k=1}^n v_k z_k }, \end{equation}
where the exponentiated sum is over the complete graph on $n$ vertices, $w_{ij}$ and $v_k$ are real edge and vertex weights, and $\omega \in \C$. Then, for any $\omega = e^{i \theta}$, $Z(\omega)$ arises straightforwardly as an amplitude of some IQP circuit $\mathcal{C}_I(\omega)$: $\langle 0|^{\otimes n}\mathcal{C}_I(\omega)|0\rangle^{\otimes n} = Z(\omega)/2^n$ (see Appendix \ref{appendix:Ising} and \cite{shepherd10,iblisdir12,ni12,fujii13,goldberg14}). For our purposes it is sufficient to restrict to the case where $\omega = e^{i\pi/8}$ and the weights are picked by choosing uniformly at random from the set $\{0,\dots,7\}$ on the vertices and edges of the complete graph on $n$ vertices.  Our results would still apply for many other choices for $\omega$ and the weights (for example, the edge weights can be picked uniformly from $\{0,\dots,3\}$).

The diagonal component of the corresponding circuits $\mathcal{C}_I(e^{i\pi/8})$ can be constructed from $\sqrt{\text{CZ}}$ (square-root of Controlled-Z, i.e.\ $\diag(1,1,1,i)$), and $T=\left( \begin{smallmatrix} 1&0\\0 & e^{i\pi/4} \end{smallmatrix} \right)$ gates, or alternatively by applying the Ising interaction directly. The number of applications of each gate is given by a simple function of the edge and vertex weights of the associated graph in such a way that random edge weights correspond to a random circuit (see the end of Appendix~\ref{sec:tech}). See Figure \ref{fig:4qubit} for an example. Let $Z_R$ denote partition functions associated with this random choice of weights.

%Such partition functions arise naturally from circuits with Ising-interaction gates such as $\omega^{X\otimes X}, \omega^{X}$ where the weights correspond to the number of applications of these gates. IQP circuits can always be written as an $n$-fold tensor product of a Hadamard gate $H^{\otimes n}$, a component whose gates are diagonal in the Pauli-Z basis, and a further application of $H^{\otimes n}$. In this case, we can consider the diagonal component in terms of commonly used  gates,  Z, $\sqrt{\text{CZ}}$, and $T=\left( \begin{smallmatrix} 1&0\\0 & e^{i\pi/4} \end{smallmatrix} \right)$ and we see that the circuits associated with $Z_R(e^{i \pi /8})$ are those that are chosen at random from this gate set. Then our conjecture is:
%\begin{defn}
%\label{defn:Ising}
%$Z_R(e^{i \pi /8})$ is the partition function of a random instance of the Ising model, picked by choosing uniformly random weights from the set $\{0,\dots,15\}$ on the vertices and edges of the complete graph on $n$ vertices.
%\end{defn}

%Our conjecture is:

\begin{con}
\label{con:Ising}
It is \#P-hard to approximate $|Z_R|^2$ up to multiplicative error $1/4 + o(1)$ for a $1/24$ fraction of instances over the choice of vertex and edge weights.
\end{con}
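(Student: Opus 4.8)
This is the IQP counterpart of the permanent-of-Gaussians conjecture of~\cite{aaronson13}, and as with that conjecture I do not expect a complete proof; the plan is to establish the natural worst-case relative of the statement and then to isolate the two steps where strengthening it to the stated average-case, constant-error form appears to require new ideas.

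\emph{Step 1 (worst-case exact hardness).} First I would show that exactly computing $|Z|^2$ for a worst-case weight assignment is \#P-hard. Since $|Z(\omega)|^2 = 2^{2n}\,|\langle 0|^{\otimes n}\mathcal{C}_I(\omega)|0\rangle^{\otimes n}|^2$, this reduces to \#P-hardness of IQP output probabilities, which I would obtain either from the post-selection characterisation $\text{post-IQP}=\text{PP}$ together with $\text{P}^{\#\text{P}}=\text{P}^{\text{PP}}$ (as in~\cite{bremner11}), or directly from \#P-hardness of Ising partition functions at $\omega=e^{i\pi/8}$: a suitable \#P-complete counting function many-one reduces to $|Z(e^{i\pi/8})|^2$ for a weight assignment that is an explicit, simple function of the instance, using exactly the encoding of instances into circuits recorded at the end of Appendix~\ref{sec:tech}, so that the hard instances lie in the stated family.

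\emph{Step 2 (exact $\to$ relative error).} Next I would embed the worst-case instance in a family of circuits whose partition function traces out a univariate polynomial of degree $\poly(n)$ in an interpolation parameter, with algebraic-integer coefficients of bounded height; evaluating $|Z|^2$ approximately at $\poly(n)$ nodes, solving the linear system, and rounding the coefficients then recovers $|Z|^2$ exactly --- the standard Lipton-style self-reduction. The catch is conditioning: the inverse Vandermonde is exponentially large, so this only certifies that approximating $|Z|^2$ to relative error $2^{-\poly(n)}$ is \#P-hard, not to the constant relative error $1/4$ the conjecture demands. Bridging constant versus inverse-exponential relative error is exactly what the anticoncentration theorem of this paper accomplishes --- but only on the sampling side used for Theorem~\ref{thm:main}, where it turns the small additive error available from a good classical sampler into $O(1)$ relative error on a typical output probability; it does not turn a constant-factor approximation algorithm for $|Z|^2$ into a reconstruction of the underlying counts. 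I expect this error gap to be the main obstacle.

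\emph{Step 3 (worst case $\to$ a $1/24$ fraction of instances).} Finally, with $\omega=e^{i\pi/8}$ fixed, $Z_R$ is a fixed low-degree polynomial in the weights evaluated at a uniformly random point, so a polynomial random-self-reducibility argument (Lipton, with quantitative refinements) should transfer hardness from the worst case to a noticeable fraction of random weight assignments, the surviving fraction being governed by the interpolation degree and the noise tolerance of the reconstruction step. Pushing that fraction up to $1/24$ while keeping the relative error at $1/4$ is tied to the obstacle in Step 2, since tolerating more error forces a smaller surviving fraction, so a genuine proof would need both an error-robust reconstruction and a sharper worst-to-average reduction than the generic ones. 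I take this combined requirement --- constant relative error together with a constant fraction of instances --- to be the reason the statement is posed only as a conjecture, mirroring the status of the permanent-of-Gaussians conjecture it is modelled on.
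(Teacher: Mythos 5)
You are right to treat this statement as unprovable by present techniques: it is posed in the paper purely as a conjecture (the IQP analogue of the permanent-of-Gaussians conjecture), and the paper offers no proof, only supporting evidence, so your decision to sketch the worst-case relative of the statement and isolate the obstacles is exactly in the spirit of the paper. One substantive correction, though, to your assessment of where the gap lies. Your Step 2 claims that polynomial-interpolation self-reduction only certifies hardness of approximating $|Z|^2$ to relative error $2^{-\poly(n)}$, and you identify the jump from inverse-exponential to constant relative error as a main obstacle. In fact, worst-case \#P-hardness at the \emph{constant} multiplicative error $1/4+o(1)$ is already known for this family of partition functions \cite{fujii13,goldberg14}, and the paper's Appendix \ref{sec:multiplicative} gives the analogous argument for $\ngap(f)^2$: rather than interpolation, one uses the postselection/Hadamard-gadget correspondence (Appendix \ref{sec:hadgadget}), which \emph{preserves multiplicative approximations}, together with a binary-search-style reduction that shows even an $\epsilon<1/2$ multiplicative approximator can be bootstrapped to exact \#P-hard counting. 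So the error-amplification difficulty you flag is already resolved in the worst case; it is not the reason the statement is a conjecture. The genuine open gap is solely your Step 3: lifting constant-error worst-case hardness to the \emph{specific} uniform distribution over vertex and edge weights in $\{0,\dots,7\}$ at a $1/24$ fraction of instances. The paper notes exactly this: generic random-self-reducibility \cite{feigenbaum93} yields \#P-hardness on average for \emph{some} distribution over instances, but not necessarily the uniform one required here, and no known technique (Lipton-style or otherwise) transfers hardness of a constant-factor approximation to a constant fraction of uniformly random weight assignments. Your proposal would be strengthened by dropping the interpolation detour in Step 2 in favour of the gadget-based constant-error worst-case hardness, and concentrating the discussion of obstacles entirely on the worst-to-average step.
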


It is known that the family of partition functions $Z(\omega)$ parametrised as above is \#P-hard to compute in the worst case up to the above multiplicative error bound~\cite{fujii13,goldberg14}. Conjecture \ref{con:Ising} thus states that this worst-case hardness result can be improved to an average-case hardness result. 

\begin{figure}[!t]
\includegraphics[width=\columnwidth]{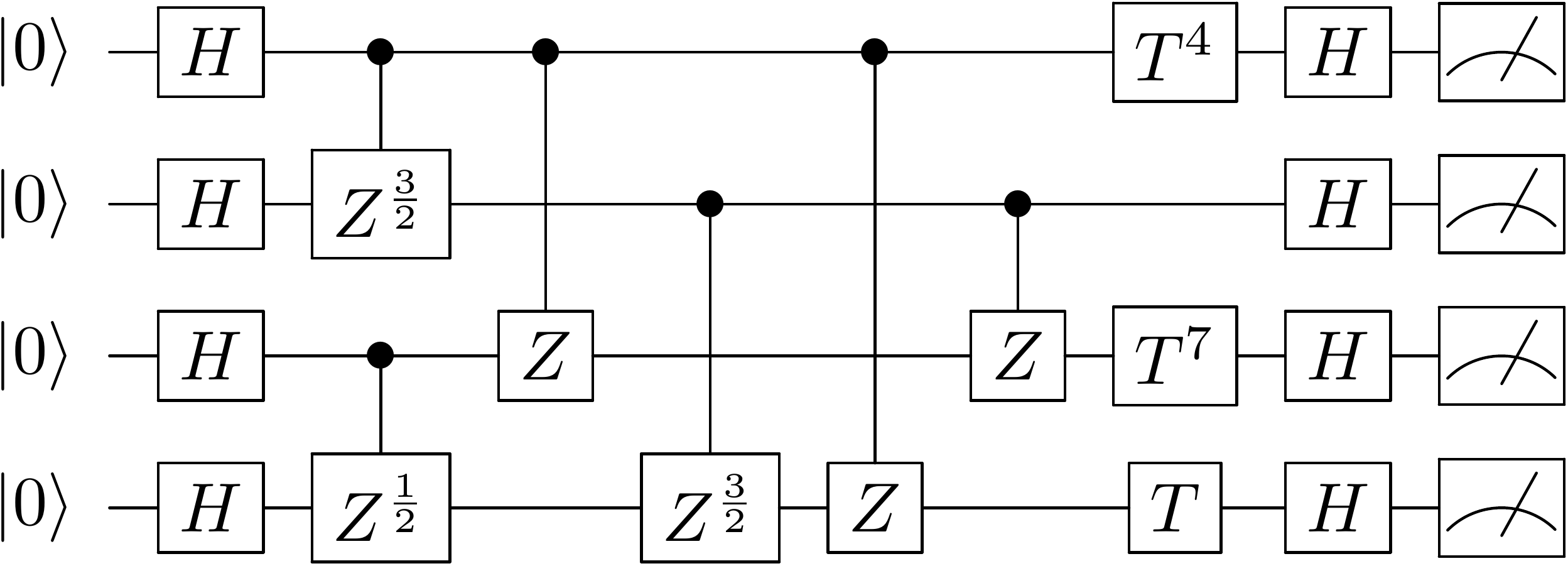}
\caption{An example of a randomly chosen circuit $\mathcal{C}_I$ corresponding to a 4-qubit Ising model instance such that $\langle 0|^{\otimes n} \mathcal{C}_I|0\rangle^{\otimes n} = Z_R/2^n$ (up to trivial phase factors). Assuming Conjecture \ref{con:Ising} is true, if there exists a classically efficient algorithm for sampling from the output of any such ($n$-qubit) circuit to within a constant additive error, then the Polynomial Hierarchy collapses.} \label{fig:4qubit}
\end{figure}
% -----------------------------------------------------------------------

Our second conjecture is based on the hardness of computing the \emph{gap} of degree-3 polynomials over $\mathbb{F}_2$, $f:\{0,1\}^n \rightarrow \{0,1\}$, which are expressible (up to an additive constant) as
\[ f(x) = \sum_{i,j,k} \alpha_{ijk} x_i x_j x_k + \sum_{i,j} \beta_{ij} x_i x_j + \sum_i \gamma_i x_i\;\;\text{(mod 2)}, \]
where $\alpha_{ijk}, \beta_{ij}, \gamma_i \in \{0,1\}$. The gap is defined by $\gap(f) := |\{x: f(x)=0\}| - |\{x: f(x)=1\}|$. It can be shown that, for any degree-3 polynomial $f$, $\langle 0|^{\otimes n}\mathcal{C}_{f} |0\rangle^{\otimes n} = \gap(f)/ 2^n$ for IQP circuits $\mathcal{C}_f$ whose diagonal component is constructed from $Z$, $CZ$, and $CCZ$ gates for the degree 1-3 terms respectively (see Appendix \ref{sec:deg3}). We write $\ngap(f)= \gap(f)/2^n$. Then we have the following conjecture:

\begin{con}
\label{con:deg3}
Let $f:\{0,1\}^n \rightarrow \{0,1\}$ be a uniformly random degree-3 polynomial over $\F_2$. Then it is \#P-hard to approximate $\ngap(f)^2$ up to a multiplicative error of $1/4 + o(1)$ for a $1/24$ fraction of polynomials $f$.
\end{con}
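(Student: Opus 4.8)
The plan is to prove the statement by a worst-case-to-average-case reduction, exactly mirroring the role the permanent-of-Gaussians conjecture plays in \cite{aaronson13}, but exploiting features special to $\F_2$. I would assume an oracle $O$ that multiplicatively approximates $\ngap(f)^2$ to within $1/4+o(1)$ on a $1/24$ fraction of uniformly random degree-3 polynomials, and derive from it a \#P algorithm for computing $\gap(g)$ for an \emph{arbitrary} (worst-case) $g$. The starting point is the provable worst-case hardness noted earlier in this paper: since $\gap(f) = 2^n - 2\,|\{x : f(x)=1\}|$ and $\ngap(f) = \langle 0|^{\otimes n}\mathcal{C}_f|0\rangle^{\otimes n}$, exactly computing $\ngap(f)^2$ is \#P-hard, so it suffices to reduce exact worst-case gap computation to $O$. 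I would combine this with the anticoncentration result for these IQP families proved elsewhere in this paper (the analogue of the permanent anticoncentration conjecture), which guarantees that a constant fraction of the mass of the $\ngap(f)^2$ distribution sits at values $\Omega(1/4^n)$, so that a multiplicative approximation to within $1/4+o(1)$ carries genuine information about $\gap(f)$ on typical instances. The constants $1/4+o(1)$ and $1/24$ are engineered to interface with the downstream Stockmeyer counting argument; the reduction itself only needs a constant fraction and a constant multiplicative slack.

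The core of the argument is the random self-reduction upgrading $O$ into exact worst-case computation. Here $\F_2$ offers one decisive simplification and one serious difficulty. The simplification is that \emph{hiding} is automatic: for any fixed $g$, the sum $g+h$ with $h$ a uniformly random degree-3 polynomial is again uniformly random, so a worst-case instance is statistically identical to a random shift of a random instance, with no distributional approximation required (unlike the Gaussian case). To extract $\gap(g)$ I would lift the $\{0,1\}$ coefficients to $\Z$, write the exponential sum $\gap(f)=\sum_{x\in\{0,1\}^n}(-1)^{F(x)}$ with $F$ the integer lift of $f$ (so that parity gives $(-1)^{f(x)}=(-1)^{F(x)}$), and restrict attention to a low-degree algebraic curve in coefficient space passing through $g$ whose remaining points are near-uniformly distributed instances. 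Evaluating $O$ at sufficiently many curve points and interpolating would recover the value at $g$. Robustness against the oracle's $23/24$ error rate and its multiplicative slack would be handled by Berlekamp--Welch-style error-correcting interpolation, together with anticoncentration to convert multiplicative control into additive control and Stockmeyer counting to amplify the $1/24$ success fraction.

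The main obstacle --- and the reason the statement remains a conjecture rather than a theorem --- is precisely this interpolation step over the two-element field. Over a large field or over $\C$ the permanent admits Lipton-style random self-reducibility because a random line through the worst-case point supplies many distinct, near-uniform interpolation nodes; over $\F_2$ there is simply no room to place them, and a random line does not sweep out uniform degree-3 polynomials. The natural remedy is to lift the gap problem to an extension field $\F_{2^k}$, or to a bounded-degree integer or root-of-unity exponential sum, while simultaneously (i) preserving the \#P-hardness of the lifted quantity, (ii) keeping the intermediate instances genuinely uniform so that the $1/24$-fraction oracle still applies, and (iii) keeping the interpolation degree low enough for error-corrected reconstruction. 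Making all three conditions hold at once, and controlling the propagation of multiplicative error through the lift, is the crux of the difficulty, and is directly analogous to the still-open hard direction of the permanent-of-Gaussians conjecture.
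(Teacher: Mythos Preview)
The statement you were asked to prove is labelled \emph{Conjecture}~\ref{con:deg3} in the paper, and the paper does not prove it. The paper explicitly presents it as an open conjecture, establishes the worst-case analogue (Appendix~\ref{sec:multiplicative} shows that multiplicatively approximating $\ngap(f)^2$ is $\GapP$-hard in the worst case), and then states plainly that ``what remains is to lift this worst-case hardness result to average-case hardness.'' The paper also remarks, citing~\cite{feigenbaum93}, that random self-reducibility of the worst-case problem yields \emph{some} hard-on-average distribution, but not necessarily the uniform one the conjecture requires. So there is no proof in the paper for your attempt to be compared against.

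Your proposal is not a proof either, and you say so yourself: you sketch a Lipton-style worst-case-to-average-case reduction and then correctly identify the fatal obstruction --- $\F_2$ is too small to furnish the interpolation nodes such a reduction needs, and lifting to an extension field or to integer exponential sums does not obviously preserve simultaneously the uniformity of instances, the \#P-hardness of the lifted quantity, and a manageable interpolation degree. This diagnosis is consistent with the paper's own discussion and with the analogous open status of the permanent-of-Gaussians conjecture. There is therefore nothing to adjudicate: neither the paper nor you proves the statement, and your account of why is sound.

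One minor correction to your framing: the specific constants $1/4+o(1)$ and $1/24$ are not free parameters to be ``engineered'' by a hypothetical reduction of the kind you describe. They are outputs of the anticoncentration bound (Lemma~\ref{lem:degle2random} combined with the Paley--Zygmund inequality) fed through Corollary~\ref{cor:addapprox}, which fixes what a classical sampler plus Stockmeyer counting would achieve. In particular, Stockmeyer's theorem (Theorem~\ref{thm:stockmeyer}) is invoked in the paper to estimate the sampler's output probabilities inside $\FBPP^{\NP}$, not to amplify an oracle's $1/24$ success fraction as your last paragraph suggests.
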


It has been known for some time that $\ngap(f)$ is \#P-hard to compute exactly in the worst case~\cite{ehrenfeucht90}. We show in Appendix~\ref{sec:multiplicative}, using IQP techniques, that this worst-case hardness still holds for approximating $\ngap(f)^2$ up to multiplicative error less than $1/2$. Just as with Conjecture \ref{con:Ising}, what remains is to lift this worst-case hardness result to average-case hardness.

We remark that an additional piece of evidence provided in~\cite{aaronson13} for the average-case hardness of multiplicative approximations to Boson Sampling (the permanent-of-Gaussians conjecture) was a direct proof that exact simulation of Boson Sampling probability distributions is hard on average. This was based on average-case hardness results for computation of the permanent, for which we do not know IQP analogues. However, currently known techniques do not seem sufficient to extend these exact average-case hardness results for Boson Sampling to approximate hardness results~\cite[Section 9.2]{aaronson13}. 

As with the case of Boson Sampling, the worst-case hardness of multiplicative approximations to both  $Z(\omega)$ and $\ngap(f)$ (\cite{goldberg14} and Appendix \ref{sec:multiplicative}) implies via standard results on random-self-reducibility \cite{feigenbaum93} that there exists \emph{some} distribution over the choices of these functions that is \#P-hard on average -- but not necessarily the distributions that we require for Conjectures \ref{con:Ising} and \ref{con:deg3}.

Interestingly, recent independent work of Fefferman and Umans~\cite{fefferman15} has explored an alternative way to generalise the ideas of Aaronson and Arkhipov~\cite{aaronson13}. This work uses Quantum Fourier Sampling to construct states whose corresponding probability distributions are hard to sample from classically, under similar conjectures to~\cite{aaronson13}. An appealing aspect of the construction of~\cite{fefferman15} is that it shows that there are specific, and rather simple, quantum states which are hard to simulate classically, assuming an anticoncentration conjecture holds. However, constructing these states appears to require the full power of quantum computation, unlike the results described here and in~\cite{aaronson13}.

% ------------------------------------------------------------------------------

%\subsection{Preliminaries}

\textbf{Intuition --} There are a number of technical ingredients of Theorem \ref{thm:main} which will be discussed in full below. The basic idea is that, for the class of problems underlying Conjectures \ref{con:Ising} and \ref{con:deg3}, any classical IQP sampler that is accurate up to a good additive error bound in the worst case, is forced to also be accurate to within a reasonable multiplicative error on average. This observation is combined with a classic result of complexity theory, the so-called Stockmeyer counting algorithm (\cite{stockmeyer85} and Appendix \ref{sec:addmult}), which can be used to estimate individual output probabilities of a classical sampler up to small multiplicative error.

Some ingredients essential to the result are anticoncentration results for $\ngap(f)$ (for Conjecture \ref{con:deg3}) and the partition function of the random Ising model (for Conjecture \ref{con:Ising}). That such anticoncentration results can be proven is a consequence of the elegant mathematical structures upon which IQP circuits are based.

%We now describe our results. We assume basic knowledge of standard complexity classes~\cite{nielsen00,papadimitriou94} such as P, \#P and BPP. A crucial complexity-theoretic tool we will use is a result of Stockmeyer on approximate counting:

%\begin{thm}[Stockmeyer~\cite{stockmeyer85}, see~\cite{aaronson13} for statement here]
%\label{thm:stockmeyer}
%There exists an $\FBPP^{\NP}$ machine which, for any boolean function $f:\{0,1\}^n \rightarrow \{0,1\}$, can approximate
%%
%\[ p = \Pr_x [ f(x) = 1 ] = \frac{1}{2^n} \sum_{x \in \{0,1\}^n} f(x) \]
%%
%to within a multiplicative factor of $(1+\epsilon)$ for all $\epsilon = \Omega(1/\poly(n))$, given oracle access to $f$.
%\end{thm}

Putting these observations together, we find that there is an $\text{FBPP}^\text{NP}$ algorithm for computing a multiplicative approximation to a large fraction of $|Z_R|^2$ and $\ngap(f)^2$. Assuming the Conjectures \ref{con:Ising} and \ref{con:deg3}, and that the Polynomial Hierarchy does not collapse, this implies that randomly chosen circuits from $\mathcal{C}_I$ and $\mathcal{C}_f$ cannot be classically simulated. 

We remark that the gate sets $\mathcal{C}_I$ and $\mathcal{C}_f$ would be universal if we could also perform Hadamard (H) gates at any point in the circuit -- which we cannot do in IQP because this gate does not commute with the X gate. However, if we allow the unphysical resource of postselection, these Hadamard gates can effectively be implemented~\cite{bremner11}, allowing IQP circuit amplitudes $\bracket{y}{\mathcal{C}}{x}$ to express any quantum circuit amplitude (up to a known constant). See Appendix~\ref{sec:hadgadget} for a description of this construction. A consequence is that classical computation of such amplitudes is \#P-hard, even up to constant multiplicative error~\cite{bremner11, goldberg14} or exponentially small additive error~\cite{ni12}.

% ------------------------------------------------------------------------------

%\section{Approximation of general IQP circuits}

\textbf{Approximation of general IQP circuits --} We first prove a key technical ingredient, which relates approximate sampling from the output distributions of IQP circuits to approximating individual output probabilities. This is essentially the same argument as used in~\cite{aaronson13} for the permanent, although we believe it becomes substantially simpler in the setting of IQP. The intuition behind this result is that adding random X gates to an IQP circuit randomly permutes the output probabilities. This allows the user of a sampler which is accurate for all circuits to obfuscate from the sampler which one of the output probabilities the user is interested in.

\begin{lem}
\label{lem:approx}
Let $\mathcal{C}$ be an arbitrary IQP circuit on $n$ qubits. Let $\mathcal{C}_x$, for $x \in \{0,1\}^n$, be the circuit produced by appending an X gate to $\mathcal{C}$ for each $i$ such that $x_i = 1$. Assume there exists a classical polynomial-time algorithm $\mathcal{A}$ which, for any IQP circuit $\mathcal{C}'$, can sample from a probability distribution which approximates the output probability distribution of $\mathcal{C}'$ up to additive error $\epsilon$ in $\ell_1$ norm. Then, for any $\delta$ such that $0 < \delta < 1$, there is a $\FBPP^{\NP}$ algorithm which, given access to $\mathcal{A}$, approximates $|\bracket{0}{\mathcal{C}_x}{0}|^2$ up to additive error
\[ O((1+o(1))\epsilon/(2^{n}\delta) + |\bracket{0}{\mathcal{C}_x}{0}|^2/\poly(n)) \]
with probability at least $1-\delta$ (over the choice of $x$).
\end{lem}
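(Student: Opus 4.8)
The plan is to exploit the fact that appending X gates permutes output probabilities, combined with Stockmeyer's approximate counting. First I would observe that if $\mathcal{C}$ is an $n$-qubit IQP circuit and we append X gates indexed by $y \in \{0,1\}^n$ to obtain $\mathcal{C}_y$, then $|\bracket{0}{\mathcal{C}_y}{0}|^2 = |\bracket{y}{\mathcal{C}}{0}|^2 = p_{\mathcal{C}}(y)$, i.e.\ the circuit $\mathcal{C}_y$ has output probability at the all-zeros string equal to the output probability of $\mathcal{C}$ at the string $y$. More importantly, for a fixed $x$, the circuit $\mathcal{C}_x$ followed by appending further X gates indexed by a uniformly random $z$ gives a circuit $(\mathcal{C}_x)_z = \mathcal{C}_{x \oplus z}$ whose output distribution is the original distribution of $\mathcal{C}_x$ with outcomes relabelled by $\oplus z$; hence the probability of outcome $0$ in $\mathcal{C}_{x \oplus z}$ equals the probability of outcome $z$ in $\mathcal{C}_x$. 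Running the sampler $\mathcal{A}$ on the random circuit $\mathcal{C}_{x \oplus z}$ thus hides from $\mathcal{A}$ which probability of $\mathcal{C}_x$ we care about.

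Next I would set up the averaging argument. Since $\mathcal{A}$ samples from a distribution $q$ with $\|q - p_{\mathcal{C}_{x\oplus z}}\|_1 \le \epsilon$ for every circuit, and the relabelling is a bijection, averaging over $z$ the quantity $|q_{\mathcal{C}_{x\oplus z}}(0) - p_{\mathcal{C}_x}(z)|$ is at most $\epsilon/2^n$ in expectation over uniform $z$ (the total $\ell_1$ error $\epsilon$ gets spread over $2^n$ outcomes). By Markov's inequality, with probability at least $1-\delta$ over $z$, the sampler's probability of outputting $0$ on circuit $\mathcal{C}_{x\oplus z}$ is within additive $\epsilon/(2^n\delta)$ of the true target probability $p_{\mathcal{C}_x}(z) = |\bracket{0}{\mathcal{C}_{x\oplus z}}{0}|^2$. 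Then I would invoke Stockmeyer's theorem (the additive-vs-multiplicative counting result from Appendix~\ref{sec:addmult}): given the classical description of the polynomial-time sampler $\mathcal{A}$ and the circuit $\mathcal{C}_{x\oplus z}$, an $\FBPP^{\NP}$ machine can compute an estimate of $q_{\mathcal{C}_{x\oplus z}}(0)$ to within multiplicative error $1/\poly(n)$, i.e.\ additive error $q_{\mathcal{C}_{x\oplus z}}(0)/\poly(n)$. Combining, the $\FBPP^{\NP}$ estimate is within additive $O(\epsilon/(2^n\delta)) + q_{\mathcal{C}_{x\oplus z}}(0)/\poly(n)$ of $|\bracket{0}{\mathcal{C}_{x\oplus z}}{0}|^2$, and bounding $q_{\mathcal{C}_{x\oplus z}}(0) \le |\bracket{0}{\mathcal{C}_{x\oplus z}}{0}|^2 + \epsilon/(2^n\delta)$ folds everything into the claimed bound $O((1+o(1))\epsilon/(2^n\delta) + |\bracket{0}{\mathcal{C}_x}{0}|^2/\poly(n))$, using that $x$ and $x\oplus z$ have the same distribution when $z$ is uniform, so the guarantee holds for a $1-\delta$ fraction of $x$.

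The main obstacle I expect is being careful about the interplay of the two error sources and the two probability spaces: the randomness over $z$ (equivalently $x$) controlling the Markov bound, versus the internal randomness and $\NP$-oracle calls of the Stockmeyer procedure, which must succeed with high probability simultaneously. One must also track that Stockmeyer's algorithm estimates the sampler's probability $q$, not the ideal probability $p$, so the $\epsilon/(2^n\delta)$ sampling error and the $1/\poly(n)$ multiplicative error have to be added rather than conflated, and the multiplicative error is measured against $q$ which then needs to be related back to $|\bracket{0}{\mathcal{C}_x}{0}|^2$. Finally I would note that the $(1+o(1))$ factor arises because Stockmeyer's estimate of $q$ can be taken to any inverse-polynomial relative precision, so the constant in front of $\epsilon/(2^n\delta)$ is $1$ up to lower-order corrections. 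None of these steps is deep; the content is entirely in the bookkeeping, and the construction closely parallels the corresponding argument for the permanent in~\cite{aaronson13}.
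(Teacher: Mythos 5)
There is a genuine gap at the central averaging step. You run the sampler on the randomized circuit $\mathcal{C}_{x\oplus z}$ and always query the \emph{fixed} outcome $0^n$, claiming that $\E_z\,|q_{\mathcal{C}_{x\oplus z}}(0) - p_{\mathcal{C}_{x\oplus z}}(0)| \le \epsilon/2^n$ because ``the total $\ell_1$ error gets spread over $2^n$ outcomes.'' This does not follow from the hypothesis: the $\ell_1$ guarantee spreads the error of a \emph{single} circuit over its $2^n$ outcomes, whereas you are averaging the error at one fixed outcome over $2^n$ \emph{different} circuits. Since the queried outcome is always $0^n$, nothing is actually hidden from the sampler, and an adversarial $\mathcal{A}$ can corrupt exactly that outcome on every input circuit: for instance, it can shift $\epsilon/2$ of probability mass onto (or away from) the outcome $0^n$ for every circuit it is given, which respects $\|q - p\|_1 \le \epsilon$ yet makes $|q_{\mathcal{C}'}(0) - p_{\mathcal{C}'}(0)| = \Theta(\epsilon)$ for all $\mathcal{C}'$. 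Your Markov step then only yields additive error $O(\epsilon/\delta)$, not $O(\epsilon/(2^n\delta))$, which is useless here because the target probabilities are themselves of order $2^{-n}$.

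The paper's proof keeps the randomness where the hiding actually is: the sampler is run on the \emph{fixed} base circuit $\mathcal{C}_0 = \mathcal{C}$, and Stockmeyer is used to estimate $q_{0y}$, the probability that $\mathcal{A}$ outputs the string $y = x$. All $2^n$ errors $|p_{0y} - q_{0y}|$ then belong to one and the same distribution with $\sum_y |p_{0y} - q_{0y}| \le \epsilon$, so Markov's inequality over the uniformly random target $y$ legitimately gives $|p_{0y} - q_{0y}| \le \epsilon/(2^n\delta)$ for a $1-\delta$ fraction of $y$, and the identity $p_{0y} = |\bracket{y}{\mathcal{C}_0}{0}|^2 = |\bracket{0}{\mathcal{C}_y}{0}|^2$ converts this into the stated bound. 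Your remaining bookkeeping --- applying Stockmeyer to the sampler's output probability, measuring the multiplicative error against $q$ and then relating $q$ back to $p$, and the origin of the $(1+o(1))$ factor --- matches the paper and is fine; what must be corrected is the choice of what to randomize: the queried outcome of one circuit, not the circuit at a fixed queried outcome.
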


\begin{proof}
For each $x,y \in \{0,1\}^n$, set
\[ p_{xy} := \Pr[\mathcal{C}_x \text{ outputs } y], q_{xy} := \Pr[\mathcal{A} \text{ outputs } y \text{ on input } \mathcal{C}_x], \]
where in both cases the probability is taken over the algorithm's internal randomness. For any $y$ of our choice, we can apply Stockmeyer's Counting Theorem (see Appendix \ref{sec:addmult}) to estimate $q_{0y}$ (where we write 0 for $0^n$). This gives us an $\FBPP^{\NP}$ algorithm which produces an estimate $\widetilde{q}_y$ such that
\[ |\widetilde{q}_y - q_{0y}| \le q_{0y}/\poly(n). \]
Then
\beas
|\widetilde{q}_y - p_{0y}| &\le& |\widetilde{q}_y - q_{0y}| + |q_{0y} - p_{0y}|\\
&\le& q_{0y}/\poly(n) + |q_{0y} - p_{0y}|\\
&\le& (p_{0y} + |q_{0y}-p_{0y}|)/\poly(n) + |q_{0y} - p_{0y}|\\
&=& p_{0y}/\poly(n) + |q_{0y}-p_{0y}|(1+1/\poly(n)).
\eeas
As $\mathcal{A}$ approximates the output probability distribution of $\mathcal{C}_0$ up to $\ell_1$ error $\epsilon$, it follows from Markov's inequality that
\[ \Pr_y[ | p_{0y} - q_{0y}| \ge \epsilon/(2^n\delta) ] \le \delta \]
for any $0 < \delta < 1$, where $y$ is picked uniformly at random. Thus
\[ |\widetilde{q}_y - p_{0y}| \le p_{0y}/\poly(n) + \frac{\epsilon(1+1/\poly(n))}{2^{n}\delta} \]
with probability at least $1-\delta$, over the choice of $y$. But
%
%\[ p_{0y} = \Pr[\mathcal{C}_0 \text{ outputs } y] = \Pr[\mathcal{C}_y \text{ outputs } 0] = |\bracket{0}{\mathcal{C}_y}{0}|^2. \]
\[ p_{0y} = \Pr[\mathcal{C}_0 \text{ outputs } y] = |\bracket{y}{\mathcal{C}_0}{0}|^2 = |\bracket{0}{\mathcal{C}_y}{0}|^2. \]
This completes the proof.
\end{proof}

If $|\bracket{0}{\mathcal{C}_x}{0}|^2 = \Omega(2^{-n})$, the algorithm of Lemma \ref{lem:approx} gives a good approximation -- i.e.\ a multiplicative approximation to within roughly $O(\epsilon)$. We state this formally, and calculate the precise constants involved, in Appendix~\ref{sec:addmult}.

We next show that this condition is indeed satisfied for two interesting families of IQP circuits.

% ------------------------------------------------------------------------------

%\section{Anticoncentration bounds}
%\label{sec:anti}

\textbf{Anticoncentration bounds --} Fix a family $\mathcal{F}$ of IQP circuits. We would like to show that $|\bracket{0}{\mathcal{C}}{0}|^2$ is likely to be high for a circuit $\mathcal{C}$ formed by picking a random circuit $\mathcal{D}$ from $\mathcal{F}$, then appending X gates on a uniformly random subset $S$ of the qubits. We will use the following fact:

\begin{fact}[Paley-Zygmund inequality]
\label{fact:pz}
If $R$ is a non-negative random variable with finite variance, then for any $0<\alpha < 1$,
$\Pr[ R \ge \alpha\,\E[R] ] \ge (1-\alpha)^2 \E[R]^2/\E[R^2].$
\end{fact}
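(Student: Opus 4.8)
The plan is to prove the Paley--Zygmund inequality by the standard first-moment/second-moment argument: split $\E[R]$ according to whether $R$ falls below or above the threshold $\alpha\,\E[R]$, and bound the two pieces separately, using Cauchy--Schwarz on the ``above'' piece.

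Concretely, since $R \ge 0$ we may write
\[ \E[R] \;=\; \E\big[R\,\mathbf{1}_{\{R < \alpha\E[R]\}}\big] \;+\; \E\big[R\,\mathbf{1}_{\{R \ge \alpha\E[R]\}}\big]. \]
The first term is at most $\alpha\,\E[R]$, since on the event $\{R < \alpha\E[R]\}$ the integrand is bounded by $\alpha\E[R]$ and the indicator has expectation at most $1$. For the second term I would apply the Cauchy--Schwarz inequality to the pair of random variables $R$ and $\mathbf{1}_{\{R \ge \alpha\E[R]\}}$, using that an indicator equals its own square:
\[ \E\big[R\,\mathbf{1}_{\{R \ge \alpha\E[R]\}}\big] \;\le\; \sqrt{\E[R^2]}\;\sqrt{\Pr[R \ge \alpha\E[R]]}. \]
The finite-variance hypothesis is exactly what makes $\E[R^2] < \infty$, so this bound is meaningful.

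Combining the two estimates gives $\E[R] \le \alpha\,\E[R] + \sqrt{\E[R^2]}\sqrt{\Pr[R \ge \alpha\E[R]]}$, hence $(1-\alpha)\E[R] \le \sqrt{\E[R^2]}\sqrt{\Pr[R \ge \alpha\E[R]]}$; both sides are non-negative since $0 < \alpha < 1$, so squaring and rearranging yields $\Pr[R \ge \alpha\E[R]] \ge (1-\alpha)^2\E[R]^2/\E[R^2]$, which is the claim. The only points needing a word of care are the degenerate cases: if $\E[R]=0$ the inequality is trivial, and if $\E[R^2]=0$ then $R=0$ almost surely and the statement is vacuous, so one may assume $\E[R^2]>0$ before dividing.

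There is essentially no genuine obstacle here --- the one ``idea'' is recognising that Cauchy--Schwarz should be applied to $R$ against the indicator $\mathbf{1}_{\{R \ge \alpha\E[R]\}}$ (rather than, say, invoking Chebyshev directly, which yields a weaker and possibly vacuous bound); the rest is bookkeeping. In the intended application I expect this Fact to be used with $R = |\bracket{0}{\mathcal{C}}{0}|^2$ for a random $\mathcal{C}$ obtained from $\mathcal{F}$ by appending $X$ gates on a random subset of qubits, so that the substantive work --- computing $\E[R]$ and $\E[R^2]$ exactly for the Ising and degree-3 families and verifying $\E[R^2] = O(\E[R]^2)$ --- lies downstream of the present statement rather than in its proof.
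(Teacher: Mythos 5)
Your argument is the standard (and correct) proof of the Paley--Zygmund inequality: split $\E[R]$ over the event $\{R \ge \alpha\E[R]\}$ and its complement, then apply Cauchy--Schwarz to $R\,\mathbf{1}_{\{R \ge \alpha\E[R]\}}$, with the degenerate cases handled appropriately. The paper states this as a known Fact and gives no proof of its own, so there is nothing to reconcile; your handling of the downstream application (bounding $\E[R^2]$ for the two circuit families) also matches where the paper actually puts the work.
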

We will apply Fact \ref{fact:pz} to the random variable $R = |\bracket{0}{\mathcal{C}}{0}|^2$, first observing that 
\bea
\E_{\mathcal{C}}[|\bracket{0}{\mathcal{C}}{0}|^2] &=& \E_{\mathcal{D},x}[|\bracket{x}{\mathcal{D}}{0}|^2] \nonumber \\
&= &\frac{1}{2^n} \E_{\mathcal{D}}\sum_{x \in \{0,1\}^n} |\bracket{x}{\mathcal{D}}{0}|^2 = \frac{1}{2^n}, \nonumber \eea
where in the second expectation $x$ is picked uniformly at random from $\{0,1\}^n$. This deals with the numerator; to handle the denominator, we need to upper-bound $\E[|\bracket{0}{\mathcal{C}}{0}|^4]$.

% ------------------------------------------------------------------------------

%\subsection{Polynomials over $\F_2$}

The first family of circuits we consider, $\mathcal{C}_f$, corresponds to polynomials over $\F_2$. We prove in Appendix \ref{sec:tech} that for uniformly random degree-3 polynomials $f$, $\E_f[\ngap(f)^4] \le 3 \cdot 2^{-2n}$. Based on this, and the tight connection between IQP circuits over the gate set $\{\text{Z}, \text{CZ}, \text{CCZ}\}$ and degree-3 polynomials, we have the following result:

\begin{thm}
\label{thm:deg3avg}
Assume there exists a classical polynomial-time algorithm $\mathcal{A}$ which, for any IQP circuit $\mathcal{C}$, can sample from a probability distribution which approximates the output probability distribution of $\mathcal{C}$ up to additive error $1/192$ in $\ell_1$ norm. Then there is an $\FBPP^{\NP}$ algorithm which, given access to $\mathcal{A}$, approximates $\ngap(f)^2$ up to multiplicative error $1/4 + o(1)$ on at least a $1/24$ fraction of degree-3 polynomials $f:\{0,1\}^n \rightarrow \{0,1\}$.
\end{thm}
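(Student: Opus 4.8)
The plan is to feed the assumed sampler $\mathcal{A}$ into Lemma~\ref{lem:approx}, applied to the circuit family $\mathcal{C}_f$, and then convert the resulting additive-error estimate of $|\bracket{0}{(\mathcal{C}_f)_x}{0}|^2$ into a multiplicative-error estimate of $\ngap(f)^2$ using the Paley--Zygmund inequality (Fact~\ref{fact:pz}) together with the fourth-moment bound $\E_f[\ngap(f)^4] \le 3\cdot 2^{-2n}$. The free parameters --- the sampler's error $\epsilon$, the failure probability $\delta$ of Lemma~\ref{lem:approx}, and the Paley--Zygmund threshold $\alpha$ --- will be chosen (as $1/192$, $1/24$ and $1/2$, with the absolute constants tracked precisely as in Appendix~\ref{sec:addmult}) so that the additive error becomes a $1/4+o(1)$ multiplicative error on a $1/24$ fraction of polynomials.

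First I would set things up so that the X-gate pattern of Lemma~\ref{lem:approx} plays the role of the linear part of a degree-3 polynomial. A uniformly random degree-3 polynomial $h$ splits uniquely as $h = q + \ell_x$, where $q$ carries the cubic and quadratic monomials, $\ell_x(z)=\sum_i x_i z_i$, and $x\in\{0,1\}^n$; in this decomposition $q$ and $x$ are independent and each uniform. From the correspondence between $\{\text{Z},\text{CZ},\text{CCZ}\}$-circuits and degree-3 polynomials (Appendix~\ref{sec:deg3}), $|\bracket{0}{(\mathcal{C}_q)_x}{0}|^2 = |\bracket{x}{\mathcal{C}_q}{0}|^2 = \ngap(q+\ell_x)^2 = \ngap(h)^2$. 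Hence the algorithm to estimate $\ngap(h)^2$ on input $h$ is simply: compute $h = q+\ell_x$, then run the $\FBPP^{\NP}$ procedure of Lemma~\ref{lem:approx} with base circuit $\mathcal{C}_q$ and X-gate pattern $x$, and report its estimate $\widetilde q$.

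The key step is to upgrade Lemma~\ref{lem:approx}'s ``for a $1-\delta$ fraction of $x$'' guarantee to ``for a $1-\delta$ fraction of polynomials $h$''. For each fixed $q$, Lemma~\ref{lem:approx} says that for at least a $1-\delta$ fraction of $x$ the estimate satisfies $|\widetilde q - \ngap(h)^2| = O\big((1+o(1))\epsilon/(2^n\delta) + \ngap(h)^2/\poly(n)\big)$; since the degree-3 polynomials are the disjoint union over $q$ of the equal-sized cosets $\{q+\ell_x : x\in\{0,1\}^n\}$, summing the bad $x$-sets over $q$ shows this bound fails for at most a $\delta$ fraction of all $h$. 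In parallel, Fact~\ref{fact:pz} applied to $R=\ngap(h)^2$, with $\E_h[R]=2^{-n}$ (the first-moment computation preceding the statement, since $\ngap$ of a uniform degree-3 polynomial has the distribution of $\bracket{0}{\mathcal{C}}{0}$ there) and $\E_h[R^2]=\E_h[\ngap(h)^4]\le 3\cdot 2^{-2n}$, gives $\Pr_h[\ngap(h)^2 \ge \alpha 2^{-n}] \ge (1-\alpha)^2/3$. With $\alpha=1/2$, $\delta=1/24$, $\epsilon=1/192$, a union bound leaves a fraction at least $(1-\alpha)^2/3-\delta = 1/12-1/24 = 1/24$ of all $h$ on which both $\ngap(h)^2 \ge 2^{-n}/2$ and the estimate is within the additive error of Lemma~\ref{lem:approx}. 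On this set $\epsilon/(2^n\delta) = 2^{-n}/8 \le \ngap(h)^2/4$, so the additive error is at most $(1/4+o(1))\ngap(h)^2$; that is, $\widetilde q$ is a multiplicative $1/4+o(1)$ approximation of $\ngap(h)^2$, as required.

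I expect the real work to lie not in this outline but in two supporting facts. One is making explicit the absolute constant in the $O(\cdot)$ of Lemma~\ref{lem:approx}, so that the thresholds $1/192$ and $1/24$ come out exactly rather than up to an unspecified factor --- this bookkeeping I would do separately (Appendix~\ref{sec:addmult}). The other, and the genuinely nontrivial inequality, is the fourth-moment bound $\E_f[\ngap(f)^4] \le 3\cdot 2^{-2n}$ for uniformly random degree-3 polynomials, which I would obtain by expanding $\ngap(f)^4$ as $2^{-4n}\sum_{z_1,z_2,z_3,z_4}\E_f[(-1)^{f(z_1)+f(z_2)+f(z_3)+f(z_4)}]$ and observing that this parity is unbiased over uniform degree-3 $f$ unless the four points split into two coinciding pairs, of which there are at most $3\cdot 2^{2n}$; this is carried out in Appendix~\ref{sec:tech} and used above as a black box.
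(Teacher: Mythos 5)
Your proposal is correct and follows essentially the same route as the paper: Lemma~\ref{lem:approx} (with $\epsilon=1/192$, $\delta=1/24$) combined via a union bound with the Paley--Zygmund bound $\Pr_f[\ngap(f)^2\ge 2^{-n-1}]\ge 1/12$ from the fourth-moment estimate of Appendix~\ref{sec:tech}, which is exactly how the paper argues through Corollary~\ref{cor:addapprox}. The extra details you supply --- the decomposition $h=q+\ell_x$ identifying appended X gates with the random linear part, and the lifting of the ``fraction of $x$'' guarantee to a ``fraction of polynomials'' guarantee --- are precisely what the paper leaves implicit (via the remark in Appendix~\ref{sec:tech} that the random $\gamma_i$ coefficients correspond to random X gates).
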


\begin{proof}
Combining Fact \ref{fact:pz} and the bound on $\E_f[\ngap(f)^4]$, we have $\Pr_f[\ngap(f)^2 \ge \alpha/ 2^n] \ge (1-\alpha)^2/3 $ for any $0 < \alpha < 1$. Fixing $\alpha = 1/2$, we get $\Pr_f[ \ngap(f)^2 \ge 2^{-n-1} ] \ge 1/12$. The claim then follows from the discussion above (where the precise parameter values stated in the theorem follow from Corollary \ref{cor:addapprox} in Appendix~\ref{sec:addmult}).
% The claim then follows by inserting these parameters into Corollary \ref{cor:addapprox}.
\end{proof}

% ------------------------------------------------------------------------------

%\subsection{The Ising model}

We next consider the Ising model, where we are interested in evaluating the partition function $Z_R$ for a randomly weighted graph (see (\ref{eq:partition})). Recall each edge of the complete graph has a weight $w_{ij}$, and each vertex has a weight $v_k$, each picked uniformly at random from the set $\{0,\dots,7\}$.

We show in Appendix \ref{sec:tech} that $\bracket{0}{\mathcal{C}_I}{0} = Z_R/2^n$ for an IQP circuit $\mathcal{C}_I$ whose diagonal component is picked from the set $\{\diag(1,1,1,i), \diag(1,e^{i\pi / 4})\}$ (up to trivial phase factors); and further that we can consider a random circuit of this form as being chosen by picking a random circuit using this gate set, then following it by a random choice of X gates. In addition, $\E_{w,v} \left[|Z_R|^4 \right] \le 3\cdot 2^{2n}$. Via Fact \ref{fact:pz} this implies the following result, whose proof is essentially the same as that of Theorem \ref{thm:deg3avg}:

\begin{thm}
\label{thm:isingavg}
Assume there exists a classical polynomial-time algorithm $\mathcal{A}$ which, for any IQP circuit $\mathcal{C}$, can sample from a probability distribution which approximates the output probability distribution of $\mathcal{C}$ up to additive error $1/192$ in $\ell_1$ norm. Then there is a $\FBPP^{\NP}$ algorithm which, given access to $\mathcal{A}$, approximates $|Z_R|^2$ up to multiplicative error $1/4 + o(1)$ with probability at least $1/24$ (over the choice of weights).
\end{thm}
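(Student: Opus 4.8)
The plan is to mirror the proof of Theorem~\ref{thm:deg3avg}, substituting the random Ising instance for the random degree-3 polynomial. Three inputs are imported from Appendices~\ref{sec:tech} and~\ref{sec:addmult}. First, $\bracket{0}{\mathcal{C}_I}{0} = Z_R/2^n$, and moreover a uniformly random Ising instance (edge and vertex weights uniform on $\{0,\dots,7\}$) is distributed as a circuit $\mathcal{D}$ whose diagonal component uses the gate set $\{\diag(1,1,1,i),\diag(1,e^{i\pi/4})\}$, followed by X gates on a uniformly random subset of the qubits, the X gates only shifting the weights and leaving the distribution of $|Z_R|^2$ unchanged. Second, the identity $\E_{\mathcal{C}}[|\bracket{0}{\mathcal{C}}{0}|^2]=2^{-n}$ of the anticoncentration discussion, which here reads $\E_{w,v}[|Z_R|^2] = 2^n$. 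Third, the fourth-moment bound $\E_{w,v}[|Z_R|^4] \le 3\cdot 2^{2n}$. Writing $R := |\bracket{0}{\mathcal{C}}{0}|^2 = |Z_R|^2/2^{2n}$ for $\mathcal{C}=\mathcal{D}$ with random appended X gates, the second and third inputs give $\E[R]=2^{-n}$ and $\E[R^2]\le 3\cdot 2^{-2n}$.

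First I would apply the Paley--Zygmund inequality (Fact~\ref{fact:pz}) to $R$ with $\alpha=1/2$:
\[
\Pr_{w,v}\!\left[|Z_R|^2 \ge 2^{n-1}\right]
= \Pr_{\mathcal{C}}\!\left[R \ge \tfrac12\E[R]\right]
\ge \frac{(1-\tfrac12)^2\E[R]^2}{\E[R^2]}
\ge \frac{2^{-2n}/4}{3\cdot 2^{-2n}}
= \frac{1}{12}.
\]
Next I would apply Lemma~\ref{lem:approx} to the base circuit $\mathcal{D}$ with $\epsilon=1/192$ and $\delta=1/24$: for a $1-\delta$ fraction of X-gate strings $x$, the resulting $\FBPP^{\NP}$ algorithm estimates $|\bracket{0}{\mathcal{D}_x}{0}|^2 = |Z_R|^2/2^{2n}$ (for the induced random instance) with additive error at most $(|Z_R|^2/2^{2n})/\poly(n) + (1+o(1))\,\epsilon/(2^n\delta)$, and here $\epsilon/(2^n\delta) = 2^{-n-1}/4$. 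On the Paley--Zygmund event $|Z_R|^2/2^{2n} \ge 2^{-n-1}$ this additive error is at most $(1/4+o(1))\,|Z_R|^2/2^{2n}$, i.e.\ a multiplicative approximation to within $1/4+o(1)$ (Corollary~\ref{cor:addapprox} in Appendix~\ref{sec:addmult} records the precise constants). A union bound over the two failure events, of probability $\le 11/12$ and $\le\delta=1/24$ respectively, leaves success probability at least $1 - 11/12 - 1/24 = 1/24$ over the choice of weights, the randomness over the appended X gates having been absorbed into the uniform weight choice by the first input above.

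The substantive content sits in the two facts imported from Appendix~\ref{sec:tech}, not in the argument above, which is essentially bookkeeping. The first of these --- that appending random X gates to a random Ising circuit reproduces the uniform-weight distribution of $|Z_R|^2$ --- is what makes the index-hiding device of Lemma~\ref{lem:approx} available for this family; it reduces to checking that an appended X gate on a vertex shifts that vertex's weight by a fixed element of the cyclic group attached to $\omega=e^{i\pi/8}$. I expect the genuine obstacle to be the fourth-moment bound $\E_{w,v}[|Z_R|^4]\le 3\cdot 2^{2n}$: one expands $|Z_R|^4$ as a sum over four spin configurations $z^{(1)},\dots,z^{(4)}$ and, on averaging over the random weights, must argue that all but the ``paired'' configurations average to zero while the survivors contribute a coefficient at most $3$. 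This is the spin analogue of the permanent anticoncentration conjecture, and is precisely what the algebraic simplicity of IQP circuits over $\{\sqrt{\text{CZ}},T\}$ is meant to supply; by contrast, once it is in hand the numerical bookkeeping ($\epsilon=1/192$, $\alpha=1/2$, $\delta=1/24$ giving multiplicative error $1/4+o(1)$ on a $1/24$ fraction) is straightforward.
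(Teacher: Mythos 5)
Your proposal is correct and follows essentially the same route as the paper: the paper proves Theorem~\ref{thm:isingavg} by citing the Appendix~\ref{sec:tech} facts ($\bracket{0}{\mathcal{C}_I}{0}=Z_R/2^n$, the random-X-gate decomposition, and $\E_{w,v}[|Z_R|^4]\le 3\cdot 2^{2n}$), applying Paley--Zygmund with $\alpha=1/2$ to get a $1/12$ fraction with $|Z_R|^2\ge 2^{n-1}$, and then invoking Corollary~\ref{cor:addapprox} with $\epsilon=1/192$, $\delta=p/2=1/24$, exactly as you do. Your union-bound bookkeeping and constants match the paper's, and you correctly identify that the genuine content lies in the imported fourth-moment bound rather than in this argument.
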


%\begin{proof}
%By a similar argument to the proof of Theorem \ref{thm:deg3avg},
%
%\[ \Pr_{w,v}\left[ |Z(e^{i \pi /8})|^2 \ge \frac{\alpha}{2^n} \right] \ge \frac{(1-\alpha)^2}{3} \]
%
%for any $0 < \alpha < 1$. Fixing $\alpha = 1/2$, we get $\Pr_f[ |Z(e^{i \pi /8})|^2 \ge 2^{-n-1} ] \ge 1/12$. The claim then follows by inserting these parameters into Corollary \ref{cor:addapprox}.
%\end{proof}

Combining Theorems \ref{thm:deg3avg} and \ref{thm:isingavg} gives our main result, Theorem \ref{thm:main}. 

% ------------------------------------------------------------------------------

%\section{Outlook}

%\textbf{Implementation and outlook --} We have argued that the following ``simple'' quantum algorithm should be classically intractible: \textbf{(1)} preparing the computational basis state $|0\rangle^{\otimes n}$, \textbf{(2)} evolving by a circuit randomly drawn from either $\mathcal{C}_I$ or $\mathcal{C}_f$, \textbf{(3)} measuring all $n$ qubits in the computational basis, and \textbf{(4)} repeating $(1)-(3)$ polynomially many times.  In our opinion, the most compelling question is how best to physically implement the classes of IQP circuits discussed in this paper, as they are arguably among the simplest classes of quantum computations that are unlikely to be classically simulable.

\textbf{Implementation and outlook --} We have argued that the following 'simple' quantum algorithm should be classically intractable: \textbf{(1)} preparing the computational basis state $|0\rangle^{\otimes n}$, \textbf{(2)} evolving by a circuit, or equivalent Hamiltonian, randomly drawn from either $\mathcal{C}_I$ (e.g. see Figure \ref{fig:4qubit}) or $\mathcal{C}_f$, \textbf{(3)} measuring all $n$ qubits in the computational basis, and \textbf{(4)} repeating \textbf{(1)}-\textbf{(3)} polynomially many times. By `classically intractable' we mean that if this process can be demonstrated in the laboratory, if the total effect of all errors can be demonstrated to remain consistently below 1/192 (in $\ell_1$ distance) even as the complexity parameter increases, and if the resources (time) the experiment takes can be argued to grow only polynomially with the complexity parameter, then the process is actively evidencing violation of the extended Church-Turing thesis.  Note also that it doesn't even matter what is the origin of randomness in step (2), provided that the sampled circuits (Hamiltonians) are independent and correctly distributed, and known to the experimenter by the end of the experiment. 
 
For both the Ising model ($\mathcal{C}_I$) and degree-3 polynomial ($\mathcal{C}_f$) case the most obvious experimental implementations require non-local gates, which is quite challenging. In terms of the Ising model case 2-qubit interactions might need to be generated between any two qubits in a system. Such interactions do not arise naturally in lattice geometries and as such will need to be engineered. From an experimental perspective there has already been significant progress in these directions. The dynamics of the Ising model with local interactions have been digitally simulated in ion traps \cite{lanyon11,britton12} and very recently non-local interactions have been utilised in the digital simulation of fermionic systems with superconductors \cite{barends15}. As technologies such as cavity buses for superconducting systems \cite{majer07} become more reliable, we expect that an increasing number of systems will be able to implement IQP circuits in a regime that is likely not to be classically simulable. 

Theoretically there are a number of natural questions that remain to be answered, the most obvious of which is whether or not Conjectures \ref{con:Ising} and \ref{con:deg3} are true. To the best of our knowledge there are no known techniques that are capable of lifting the known worst-case hardness results for the Ising model and gaps of degree-3 polynomials to average-case results. However, recent breakthroughs \cite{sly12,sinclair14,galanis14} in categorising the complexity of statistical mechanical systems via the underlying interaction graph properties give some hope that these conjectures can be resolved. 

As well as the connections used here between IQP, the Ising model and low-degree polynomials, it is known that IQP circuits are closely related to Tutte polynomials and weight enumerator polynomials of binary linear codes~\cite{shepherd10}. It is a compelling question as to whether the ideas developed here can be extended to these models. It would also be very interesting to explore the connection between nonadaptive measurement-based quantum computing and IQP~\cite{browne11,hoban14} in order to find subclasses of IQP that are physically straightforward to implement, and for which one can prove conditional hardness results similar to those shown here.

% ------------------------------------------------------------------------------

\begin{acknowledgements}
We would like to thank Aram Harrow, Richard Jozsa, Gavin Brennen, Steve Flammia, and Peter Rohde for helpful discussions and comments on this manuscript. AM was supported by the UK EPSRC under Early Career Fellowship EP/L021005/1. MJB has received financial support from the Australian Research Council via the Future Fellowship scheme (grant FT110101044), and Lockheed Martin Corporation. % We would also like to thank Y. Henrys for motivation in the final stages of this work.
\end{acknowledgements}

% ------------------------------------------------------------------------------

\bibliography{rsr2}

%merlin.mbs apsrev4-1.bst 2010-07-25 4.21a (PWD, AO, DPC) hacked
%Control: key (0)
%Control: author (0) dotless jnrlst
%Control: editor formatted (1) identically to author
%Control: production of article title (0) allowed
%Control: page (1) range
%Control: year (0) verbatim
%Control: production of eprint (0) enabled
\begin{thebibliography}{28}%
\makeatletter
\providecommand \@ifxundefined [1]{%
 \@ifx{#1\undefined}
}%
\providecommand \@ifnum [1]{%
 \ifnum #1\expandafter \@firstoftwo
 \else \expandafter \@secondoftwo
 \fi
}%
\providecommand \@ifx [1]{%
 \ifx #1\expandafter \@firstoftwo
 \else \expandafter \@secondoftwo
 \fi
}%
\providecommand \natexlab [1]{#1}%
\providecommand \enquote  [1]{``#1''}%
\providecommand \bibnamefont  [1]{#1}%
\providecommand \bibfnamefont [1]{#1}%
\providecommand \citenamefont [1]{#1}%
\providecommand \href@noop [0]{\@secondoftwo}%
\providecommand \href [0]{\begingroup \@sanitize@url \@href}%
\providecommand \@href[1]{\@@startlink{#1}\@@href}%
\providecommand \@@href[1]{\endgroup#1\@@endlink}%
\providecommand \@sanitize@url [0]{\catcode `\\12\catcode `\$12\catcode
  `\&12\catcode `\#12\catcode `\^12\catcode `\_12\catcode `\%12\relax}%
\providecommand \@@startlink[1]{}%
\providecommand \@@endlink[0]{}%
\providecommand \url  [0]{\begingroup\@sanitize@url \@url }%
\providecommand \@url [1]{\endgroup\@href {#1}{\urlprefix }}%
\providecommand \urlprefix  [0]{URL }%
\providecommand \Eprint [0]{\href }%
\providecommand \doibase [0]{http://dx.doi.org/}%
\providecommand \selectlanguage [0]{\@gobble}%
\providecommand \bibinfo  [0]{\@secondoftwo}%
\providecommand \bibfield  [0]{\@secondoftwo}%
\providecommand \translation [1]{[#1]}%
\providecommand \BibitemOpen [0]{}%
\providecommand \bibitemStop [0]{}%
\providecommand \bibitemNoStop [0]{.\EOS\space}%
\providecommand \EOS [0]{\spacefactor3000\relax}%
\providecommand \BibitemShut  [1]{\csname bibitem#1\endcsname}%
\let\auto@bib@innerbib\@empty
%</preamble>
\bibitem [{\citenamefont {Shor}(1997)}]{shor97}%
  \BibitemOpen
  \bibfield  {author} {\bibinfo {author} {\bibfnamefont {P.~W.}\ \bibnamefont
  {Shor}},\ }\bibfield  {title} {\enquote {\bibinfo {title} {Polynomial-time
  algorithms for prime factorization and discrete logarithms on a quantum
  computer},}\ }\href@noop {} {\bibfield  {journal} {\bibinfo  {journal} {SIAM
  J. Comput.}\ }\textbf {\bibinfo {volume} {26}},\ \bibinfo {pages}
  {1484--1509} (\bibinfo {year} {1997})},\ \Eprint
  {http://arxiv.org/abs/quant-ph/9508027} {quant-ph/9508027} \BibitemShut
  {NoStop}%
\bibitem [{\citenamefont {Georgescu}\ \emph {et~al.}(2014)\citenamefont
  {Georgescu}, \citenamefont {Ashhab},\ and\ \citenamefont
  {Nori}}]{georgescu14}%
  \BibitemOpen
  \bibfield  {author} {\bibinfo {author} {\bibfnamefont {I.}~\bibnamefont
  {Georgescu}}, \bibinfo {author} {\bibfnamefont {S.}~\bibnamefont {Ashhab}}, \
  and\ \bibinfo {author} {\bibfnamefont {F.}~\bibnamefont {Nori}},\ }\bibfield
  {title} {\enquote {\bibinfo {title} {Quantum simulation},}\ }\href@noop {}
  {\bibfield  {journal} {\bibinfo  {journal} {Rev. Mod. Phys.}\ }\textbf
  {\bibinfo {volume} {86}},\ \bibinfo {pages} {153} (\bibinfo {year} {2014})},\
  \Eprint {http://arxiv.org/abs/arXiv:1308.6253} {arXiv:1308.6253} \BibitemShut
  {NoStop}%
\bibitem [{\citenamefont {Bremner}\ \emph {et~al.}(2011)\citenamefont
  {Bremner}, \citenamefont {Jozsa},\ and\ \citenamefont
  {Shepherd}}]{bremner11}%
  \BibitemOpen
  \bibfield  {author} {\bibinfo {author} {\bibfnamefont {M.}~\bibnamefont
  {Bremner}}, \bibinfo {author} {\bibfnamefont {R.}~\bibnamefont {Jozsa}}, \
  and\ \bibinfo {author} {\bibfnamefont {D.}~\bibnamefont {Shepherd}},\
  }\bibfield  {title} {\enquote {\bibinfo {title} {Classical simulation of
  commuting quantum computations implies collapse of the polynomial
  hierarchy},}\ }\href@noop {} {\bibfield  {journal} {\bibinfo  {journal}
  {Proc. R. Soc. A}\ }\textbf {\bibinfo {volume} {467}},\ \bibinfo {pages}
  {459--472} (\bibinfo {year} {2011})},\ \Eprint
  {http://arxiv.org/abs/arXiv:1005.1407} {arXiv:1005.1407} \BibitemShut
  {NoStop}%
\bibitem [{\citenamefont {Aaronson}\ and\ \citenamefont
  {Arkhipov}(2013)}]{aaronson13}%
  \BibitemOpen
  \bibfield  {author} {\bibinfo {author} {\bibfnamefont {S.}~\bibnamefont
  {Aaronson}}\ and\ \bibinfo {author} {\bibfnamefont {A.}~\bibnamefont
  {Arkhipov}},\ }\bibfield  {title} {\enquote {\bibinfo {title} {The
  computational complexity of linear optics},}\ }\href@noop {} {\bibfield
  {journal} {\bibinfo  {journal} {Theory of Computing}\ }\textbf {\bibinfo
  {volume} {9}},\ \bibinfo {pages} {143--252} (\bibinfo {year} {2013})},\
  \Eprint {http://arxiv.org/abs/arXiv:1011.3245} {arXiv:1011.3245} \BibitemShut
  {NoStop}%
\bibitem [{\citenamefont {Papadimitriou}(1994)}]{papadimitriou94}%
  \BibitemOpen
  \bibfield  {author} {\bibinfo {author} {\bibfnamefont {C.}~\bibnamefont
  {Papadimitriou}},\ }\href@noop {} {\emph {\bibinfo {title} {Computational
  {C}omplexity}}}\ (\bibinfo  {publisher} {Addison-Wesley},\ \bibinfo {year}
  {1994})\BibitemShut {NoStop}%
\bibitem [{\citenamefont {Shepherd}\ and\ \citenamefont
  {Bremner}(2009)}]{shepherd09}%
  \BibitemOpen
  \bibfield  {author} {\bibinfo {author} {\bibfnamefont {D.}~\bibnamefont
  {Shepherd}}\ and\ \bibinfo {author} {\bibfnamefont {M.~J.}\ \bibnamefont
  {Bremner}},\ }\bibfield  {title} {\enquote {\bibinfo {title} {Temporally
  unstructured quantum computation},}\ }\href@noop {} {\bibfield  {journal}
  {\bibinfo  {journal} {Proceedings of the Royal Society of London A:
  Mathematical, Physical and Engineering Sciences}\ }\textbf {\bibinfo {volume}
  {465}},\ \bibinfo {pages} {1413--1439} (\bibinfo {year} {2009})},\ \Eprint
  {http://arxiv.org/abs/arXiv:0809.0847} {arXiv:0809.0847} \BibitemShut
  {NoStop}%
\bibitem [{\citenamefont {Shepherd}(2010)}]{shepherd10}%
  \BibitemOpen
  \bibfield  {author} {\bibinfo {author} {\bibfnamefont {D.}~\bibnamefont
  {Shepherd}},\ }\bibfield  {title} {\enquote {\bibinfo {title} {Binary
  {Matroids} and {Quantum} {Probability} {Distributions}},}\ }\href@noop {} {\
  (\bibinfo {year} {2010})},\ \Eprint {http://arxiv.org/abs/arXiv:1005.1744}
  {arXiv:1005.1744} \BibitemShut {NoStop}%
\bibitem [{\citenamefont {Iblisdir}\ \emph {et~al.}(2014)\citenamefont
  {Iblisdir}, \citenamefont {Cirio}, \citenamefont {Boada},\ and\ \citenamefont
  {Brennen}}]{iblisdir12}%
  \BibitemOpen
  \bibfield  {author} {\bibinfo {author} {\bibfnamefont {S.}~\bibnamefont
  {Iblisdir}}, \bibinfo {author} {\bibfnamefont {M.}~\bibnamefont {Cirio}},
  \bibinfo {author} {\bibfnamefont {O.}~\bibnamefont {Boada}}, \ and\ \bibinfo
  {author} {\bibfnamefont {G.~K.}\ \bibnamefont {Brennen}},\ }\bibfield
  {title} {\enquote {\bibinfo {title} {Low depth quantum circuits for {Ising}
  models},}\ }\href@noop {} {\bibfield  {journal} {\bibinfo  {journal} {Annals
  of Physics}\ }\textbf {\bibinfo {volume} {340}},\ \bibinfo {pages} {205--251}
  (\bibinfo {year} {2014})},\ \Eprint {http://arxiv.org/abs/arXiv:1208.3918}
  {arXiv:1208.3918} \BibitemShut {NoStop}%
\bibitem [{\citenamefont {Ni}\ and\ \citenamefont {Van~den Nest}(2013)}]{ni12}%
  \BibitemOpen
  \bibfield  {author} {\bibinfo {author} {\bibfnamefont {X.}~\bibnamefont
  {Ni}}\ and\ \bibinfo {author} {\bibfnamefont {M.}~\bibnamefont {Van~den
  Nest}},\ }\bibfield  {title} {\enquote {\bibinfo {title} {Commuting quantum
  circuits: efficient classical simulations versus hardness results},}\
  }\href@noop {} {\bibfield  {journal} {\bibinfo  {journal} {Quantum
  Information and Computation}\ }\textbf {\bibinfo {volume} {13}},\ \bibinfo
  {pages} {0054--0072} (\bibinfo {year} {2013})},\ \Eprint
  {http://arxiv.org/abs/arXiv:1204.4570} {arXiv:1204.4570} \BibitemShut
  {NoStop}%
\bibitem [{\citenamefont {Fujii}\ and\ \citenamefont
  {Morimae}(2013)}]{fujii13}%
  \BibitemOpen
  \bibfield  {author} {\bibinfo {author} {\bibfnamefont {K.}~\bibnamefont
  {Fujii}}\ and\ \bibinfo {author} {\bibfnamefont {T.}~\bibnamefont
  {Morimae}},\ }\href@noop {} {\enquote {\bibinfo {title} {Quantum commuting
  circuits and complexity of {I}sing partition functions},}\ } (\bibinfo {year}
  {2013}),\ \Eprint {http://arxiv.org/abs/arXiv:1311.2128} {arXiv:1311.2128}
  \BibitemShut {NoStop}%
\bibitem [{\citenamefont {Goldberg}\ and\ \citenamefont
  {Guo}(2014)}]{goldberg14}%
  \BibitemOpen
  \bibfield  {author} {\bibinfo {author} {\bibfnamefont {L.~A.}\ \bibnamefont
  {Goldberg}}\ and\ \bibinfo {author} {\bibfnamefont {H.}~\bibnamefont {Guo}},\
  }\href@noop {} {\enquote {\bibinfo {title} {The complexity of approximating
  complex-valued {Ising} and {Tutte} partition functions},}\ } (\bibinfo {year}
  {2014}),\ \Eprint {http://arxiv.org/abs/arXiv:1409.5627} {arXiv:1409.5627}
  \BibitemShut {NoStop}%
\bibitem [{\citenamefont {Ehrenfeucht}\ and\ \citenamefont
  {Karpinski}(1990)}]{ehrenfeucht90}%
  \BibitemOpen
  \bibfield  {author} {\bibinfo {author} {\bibfnamefont {A.}~\bibnamefont
  {Ehrenfeucht}}\ and\ \bibinfo {author} {\bibfnamefont {M.}~\bibnamefont
  {Karpinski}},\ }\href@noop {} {\enquote {\bibinfo {title} {The computational
  complexity of ({XOR}, {AND})-counting problems},}\ } (\bibinfo {year}
  {1990}),\ \bibinfo {note} {technical Report 8543-CS}\BibitemShut {NoStop}%
\bibitem [{\citenamefont {Feigenbaum}\ and\ \citenamefont
  {Fortnow}(1993)}]{feigenbaum93}%
  \BibitemOpen
  \bibfield  {author} {\bibinfo {author} {\bibfnamefont {J.}~\bibnamefont
  {Feigenbaum}}\ and\ \bibinfo {author} {\bibfnamefont {L.}~\bibnamefont
  {Fortnow}},\ }\bibfield  {title} {\enquote {\bibinfo {title}
  {Random-{Self}-{Reducibility} of {Complete} {Sets}},}\ }\href {\doibase
  10.1137/0222061} {\bibfield  {journal} {\bibinfo  {journal} {SIAM Journal on
  Computing}\ }\textbf {\bibinfo {volume} {22}},\ \bibinfo {pages} {994--1005}
  (\bibinfo {year} {1993})}\BibitemShut {NoStop}%
\bibitem [{\citenamefont {Fefferman}\ and\ \citenamefont
  {Umans}(2015)}]{fefferman15}%
  \BibitemOpen
  \bibfield  {author} {\bibinfo {author} {\bibfnamefont {B.}~\bibnamefont
  {Fefferman}}\ and\ \bibinfo {author} {\bibfnamefont {C.}~\bibnamefont
  {Umans}},\ }\href@noop {} {\enquote {\bibinfo {title} {On the power of
  {Q}uantum {F}ourier {S}ampling},}\ } (\bibinfo {year} {2015}),\ \bibinfo
  {note} {poster presentation, QIP 2015}\BibitemShut {NoStop}%
\bibitem [{\citenamefont {Stockmeyer}(1985)}]{stockmeyer85}%
  \BibitemOpen
  \bibfield  {author} {\bibinfo {author} {\bibfnamefont {L.}~\bibnamefont
  {Stockmeyer}},\ }\bibfield  {title} {\enquote {\bibinfo {title} {On
  approximation algorithms for {\#}{P}},}\ }\href@noop {} {\bibfield  {journal}
  {\bibinfo  {journal} {SIAM J. Comput.}\ }\textbf {\bibinfo {volume} {14}},\
  \bibinfo {pages} {849--861} (\bibinfo {year} {1985})}\BibitemShut {NoStop}%
\bibitem [{\citenamefont {Lanyon}\ \emph {et~al.}(2011)\citenamefont {Lanyon},
  \citenamefont {Hempel}, \citenamefont {Nigg}, \citenamefont {Müller},
  \citenamefont {Gerritsma}, \citenamefont {Zähringer}, \citenamefont
  {Schindler}, \citenamefont {Barreiro}, \citenamefont {Rambach}, \citenamefont
  {Kirchmair}, \citenamefont {Hennrich}, \citenamefont {Zoller}, \citenamefont
  {Blatt},\ and\ \citenamefont {Roos}}]{lanyon11}%
  \BibitemOpen
  \bibfield  {author} {\bibinfo {author} {\bibfnamefont {B.~P.}\ \bibnamefont
  {Lanyon}}, \bibinfo {author} {\bibfnamefont {C.}~\bibnamefont {Hempel}},
  \bibinfo {author} {\bibfnamefont {D.}~\bibnamefont {Nigg}}, \bibinfo {author}
  {\bibfnamefont {M.}~\bibnamefont {Müller}}, \bibinfo {author} {\bibfnamefont
  {R.}~\bibnamefont {Gerritsma}}, \bibinfo {author} {\bibfnamefont
  {F.}~\bibnamefont {Zähringer}}, \bibinfo {author} {\bibfnamefont
  {P.}~\bibnamefont {Schindler}}, \bibinfo {author} {\bibfnamefont {J.~T.}\
  \bibnamefont {Barreiro}}, \bibinfo {author} {\bibfnamefont {M.}~\bibnamefont
  {Rambach}}, \bibinfo {author} {\bibfnamefont {G.}~\bibnamefont {Kirchmair}},
  \bibinfo {author} {\bibfnamefont {M.}~\bibnamefont {Hennrich}}, \bibinfo
  {author} {\bibfnamefont {P.}~\bibnamefont {Zoller}}, \bibinfo {author}
  {\bibfnamefont {R.}~\bibnamefont {Blatt}}, \ and\ \bibinfo {author}
  {\bibfnamefont {C.~F.}\ \bibnamefont {Roos}},\ }\bibfield  {title} {\enquote
  {\bibinfo {title} {Universal {Digital} {Quantum} {Simulation} with {Trapped}
  {Ions}},}\ }\href@noop {} {\bibfield  {journal} {\bibinfo  {journal}
  {Science}\ }\textbf {\bibinfo {volume} {334}},\ \bibinfo {pages} {57--61}
  (\bibinfo {year} {2011})},\ \Eprint {http://arxiv.org/abs/arXiv:1109.1512}
  {arXiv:1109.1512} \BibitemShut {NoStop}%
\bibitem [{\citenamefont {Britton}\ \emph {et~al.}(2012)\citenamefont
  {Britton}, \citenamefont {Sawyer}, \citenamefont {Keith}, \citenamefont
  {Wang}, \citenamefont {Freericks}, \citenamefont {Uys}, \citenamefont
  {Biercuk},\ and\ \citenamefont {Bollinger}}]{britton12}%
  \BibitemOpen
  \bibfield  {author} {\bibinfo {author} {\bibfnamefont {J.~W.}\ \bibnamefont
  {Britton}}, \bibinfo {author} {\bibfnamefont {B.~C.}\ \bibnamefont {Sawyer}},
  \bibinfo {author} {\bibfnamefont {A.~C.}\ \bibnamefont {Keith}}, \bibinfo
  {author} {\bibfnamefont {C.-C.~Joseph}\ \bibnamefont {Wang}}, \bibinfo
  {author} {\bibfnamefont {J.~K.}\ \bibnamefont {Freericks}}, \bibinfo {author}
  {\bibfnamefont {H.}~\bibnamefont {Uys}}, \bibinfo {author} {\bibfnamefont
  {M.~J.}\ \bibnamefont {Biercuk}}, \ and\ \bibinfo {author} {\bibfnamefont
  {J.~J.}\ \bibnamefont {Bollinger}},\ }\bibfield  {title} {\enquote {\bibinfo
  {title} {Engineered two-dimensional {Ising} interactions in a trapped-ion
  quantum simulator with hundreds of spins},}\ }\href@noop {} {\bibfield
  {journal} {\bibinfo  {journal} {Nature}\ }\textbf {\bibinfo {volume} {484}},\
  \bibinfo {pages} {489--492} (\bibinfo {year} {2012})},\ \Eprint
  {http://arxiv.org/abs/arXiv:1204.5789} {arXiv:1204.5789} \BibitemShut
  {NoStop}%
\bibitem [{\citenamefont {Barends}\ \emph {et~al.}(2015)\citenamefont
  {Barends}, \citenamefont {Lamata}, \citenamefont {Kelly}, \citenamefont
  {García-Álvarez}, \citenamefont {Fowler}, \citenamefont {Megrant},
  \citenamefont {Jeffrey}, \citenamefont {White}, \citenamefont {Sank},
  \citenamefont {Mutus}, \citenamefont {Campbell}, \citenamefont {Chen},
  \citenamefont {Chen}, \citenamefont {Chiaro}, \citenamefont {Dunsworth},
  \citenamefont {Hoi}, \citenamefont {Neill}, \citenamefont {O'Malley},
  \citenamefont {Quintana}, \citenamefont {Roushan}, \citenamefont
  {Vainsencher}, \citenamefont {Wenner}, \citenamefont {Solano},\ and\
  \citenamefont {Martinis}}]{barends15}%
  \BibitemOpen
  \bibfield  {author} {\bibinfo {author} {\bibfnamefont {R.}~\bibnamefont
  {Barends}}, \bibinfo {author} {\bibfnamefont {L.}~\bibnamefont {Lamata}},
  \bibinfo {author} {\bibfnamefont {J.}~\bibnamefont {Kelly}}, \bibinfo
  {author} {\bibfnamefont {L.}~\bibnamefont {García-Álvarez}}, \bibinfo
  {author} {\bibfnamefont {A.~G.}\ \bibnamefont {Fowler}}, \bibinfo {author}
  {\bibfnamefont {A.}~\bibnamefont {Megrant}}, \bibinfo {author} {\bibfnamefont
  {E.}~\bibnamefont {Jeffrey}}, \bibinfo {author} {\bibfnamefont {T.~C.}\
  \bibnamefont {White}}, \bibinfo {author} {\bibfnamefont {D.}~\bibnamefont
  {Sank}}, \bibinfo {author} {\bibfnamefont {J.~Y.}\ \bibnamefont {Mutus}},
  \bibinfo {author} {\bibfnamefont {B.}~\bibnamefont {Campbell}}, \bibinfo
  {author} {\bibfnamefont {Yu}~\bibnamefont {Chen}}, \bibinfo {author}
  {\bibfnamefont {Z.}~\bibnamefont {Chen}}, \bibinfo {author} {\bibfnamefont
  {B.}~\bibnamefont {Chiaro}}, \bibinfo {author} {\bibfnamefont
  {A.}~\bibnamefont {Dunsworth}}, \bibinfo {author} {\bibfnamefont {I.-C.}\
  \bibnamefont {Hoi}}, \bibinfo {author} {\bibfnamefont {C.}~\bibnamefont
  {Neill}}, \bibinfo {author} {\bibfnamefont {P.~J.~J.}\ \bibnamefont
  {O'Malley}}, \bibinfo {author} {\bibfnamefont {C.}~\bibnamefont {Quintana}},
  \bibinfo {author} {\bibfnamefont {P.}~\bibnamefont {Roushan}}, \bibinfo
  {author} {\bibfnamefont {A.}~\bibnamefont {Vainsencher}}, \bibinfo {author}
  {\bibfnamefont {J.}~\bibnamefont {Wenner}}, \bibinfo {author} {\bibfnamefont
  {E.}~\bibnamefont {Solano}}, \ and\ \bibinfo {author} {\bibfnamefont
  {John~M.}\ \bibnamefont {Martinis}},\ }\bibfield  {title} {\enquote {\bibinfo
  {title} {Digital quantum simulation of fermionic models with a
  superconducting circuit},}\ }\href@noop {} {\  (\bibinfo {year} {2015})},\
  \Eprint {http://arxiv.org/abs/arXiv:1501.07703} {arXiv:1501.07703}
  \BibitemShut {NoStop}%
\bibitem [{\citenamefont {Majer}\ \emph {et~al.}(2007)\citenamefont {Majer},
  \citenamefont {Chow}, \citenamefont {Gambetta}, \citenamefont {Koch},
  \citenamefont {Johnson}, \citenamefont {Schreier}, \citenamefont {Frunzio},
  \citenamefont {Schuster}, \citenamefont {Houck}, \citenamefont {Wallraff},
  \citenamefont {Blais}, \citenamefont {Devoret}, \citenamefont {Girvin},\ and\
  \citenamefont {Schoelkopf}}]{majer07}%
  \BibitemOpen
  \bibfield  {author} {\bibinfo {author} {\bibfnamefont {J.}~\bibnamefont
  {Majer}}, \bibinfo {author} {\bibfnamefont {J.~M.}\ \bibnamefont {Chow}},
  \bibinfo {author} {\bibfnamefont {J.~M.}\ \bibnamefont {Gambetta}}, \bibinfo
  {author} {\bibfnamefont {Jens}\ \bibnamefont {Koch}}, \bibinfo {author}
  {\bibfnamefont {B.~R.}\ \bibnamefont {Johnson}}, \bibinfo {author}
  {\bibfnamefont {J.~A.}\ \bibnamefont {Schreier}}, \bibinfo {author}
  {\bibfnamefont {L.}~\bibnamefont {Frunzio}}, \bibinfo {author} {\bibfnamefont
  {D.~I.}\ \bibnamefont {Schuster}}, \bibinfo {author} {\bibfnamefont {A.~A.}\
  \bibnamefont {Houck}}, \bibinfo {author} {\bibfnamefont {A.}~\bibnamefont
  {Wallraff}}, \bibinfo {author} {\bibfnamefont {A.}~\bibnamefont {Blais}},
  \bibinfo {author} {\bibfnamefont {M.~H.}\ \bibnamefont {Devoret}}, \bibinfo
  {author} {\bibfnamefont {S.~M.}\ \bibnamefont {Girvin}}, \ and\ \bibinfo
  {author} {\bibfnamefont {R.~J.}\ \bibnamefont {Schoelkopf}},\ }\bibfield
  {title} {\enquote {\bibinfo {title} {Coupling superconducting qubits via a
  cavity bus},}\ }\href@noop {} {\bibfield  {journal} {\bibinfo  {journal}
  {Nature}\ }\textbf {\bibinfo {volume} {449}},\ \bibinfo {pages} {443--447}
  (\bibinfo {year} {2007})},\ \Eprint {http://arxiv.org/abs/arXiv:0709.2135}
  {arXiv:0709.2135} \BibitemShut {NoStop}%
\bibitem [{\citenamefont {Sly}\ and\ \citenamefont {Sun}(2012)}]{sly12}%
  \BibitemOpen
  \bibfield  {author} {\bibinfo {author} {\bibfnamefont {A.}~\bibnamefont
  {Sly}}\ and\ \bibinfo {author} {\bibfnamefont {N.}~\bibnamefont {Sun}},\
  }\bibfield  {title} {\enquote {\bibinfo {title} {The {Computational}
  {Hardness} of {Counting} in {Two}-{Spin} {Models} on d-{Regular} {Graphs}},}\
  }in\ \href@noop {} {\emph {\bibinfo {booktitle} {2012 {IEEE} 53rd {Annual}
  {Symposium} on {Foundations} of {Computer} {Science} ({FOCS})}}}\ (\bibinfo
  {year} {2012})\ pp.\ \bibinfo {pages} {361--369},\ \Eprint
  {http://arxiv.org/abs/arXiv:1203.2602} {arXiv:1203.2602} \BibitemShut
  {NoStop}%
\bibitem [{\citenamefont {Sinclair}\ \emph {et~al.}(2014)\citenamefont
  {Sinclair}, \citenamefont {Srivastava},\ and\ \citenamefont
  {Thurley}}]{sinclair14}%
  \BibitemOpen
  \bibfield  {author} {\bibinfo {author} {\bibfnamefont {A.}~\bibnamefont
  {Sinclair}}, \bibinfo {author} {\bibfnamefont {P.}~\bibnamefont
  {Srivastava}}, \ and\ \bibinfo {author} {\bibfnamefont {M.}~\bibnamefont
  {Thurley}},\ }\bibfield  {title} {\enquote {\bibinfo {title} {Approximation
  {Algorithms} for {Two}-{State} {Anti}-{Ferromagnetic} {Spin} {Systems} on
  {Bounded} {Degree} {Graphs}},}\ }\href@noop {} {\bibfield  {journal}
  {\bibinfo  {journal} {Journal of Statistical Physics}\ }\textbf {\bibinfo
  {volume} {155}},\ \bibinfo {pages} {666--686} (\bibinfo {year} {2014})},\
  \Eprint {http://arxiv.org/abs/arXiv:1107.2368} {arXiv:1107.2368} \BibitemShut
  {NoStop}%
\bibitem [{\citenamefont {Galanis}\ \emph {et~al.}(2014)\citenamefont
  {Galanis}, \citenamefont {Stefankovic},\ and\ \citenamefont
  {Vigoda}}]{galanis14}%
  \BibitemOpen
  \bibfield  {author} {\bibinfo {author} {\bibfnamefont {A.}~\bibnamefont
  {Galanis}}, \bibinfo {author} {\bibfnamefont {D.}~\bibnamefont
  {Stefankovic}}, \ and\ \bibinfo {author} {\bibfnamefont {E.}~\bibnamefont
  {Vigoda}},\ }\bibfield  {title} {\enquote {\bibinfo {title}
  {Inapproximability for {Antiferromagnetic} {Spin} {Systems} in the {Tree}
  {Non}-uniqueness {Region}},}\ }in\ \href@noop {} {\emph {\bibinfo {booktitle}
  {Proceedings of the 46th {Annual} {ACM} {Symposium} on {Theory} of
  {Computing}}}},\ \bibinfo {series and number} {{STOC} '14}\ (\bibinfo
  {publisher} {ACM},\ \bibinfo {address} {New York, NY, USA},\ \bibinfo {year}
  {2014})\ pp.\ \bibinfo {pages} {823--831},\ \Eprint
  {http://arxiv.org/abs/arXiv:1305.2902} {arXiv:1305.2902} \BibitemShut
  {NoStop}%
\bibitem [{\citenamefont {Browne}\ \emph {et~al.}(2011)\citenamefont {Browne},
  \citenamefont {Kashefi},\ and\ \citenamefont {Perdrix}}]{browne11}%
  \BibitemOpen
  \bibfield  {author} {\bibinfo {author} {\bibfnamefont {D.}~\bibnamefont
  {Browne}}, \bibinfo {author} {\bibfnamefont {E.}~\bibnamefont {Kashefi}}, \
  and\ \bibinfo {author} {\bibfnamefont {S.}~\bibnamefont {Perdrix}},\
  }\bibfield  {title} {\enquote {\bibinfo {title} {Computational depth
  complexity of measurement-based quantum computation},}\ }in\ \href@noop {}
  {\emph {\bibinfo {booktitle} {Proc. 6\textsuperscript{th} Conference on the
  Theory of Quantum Computation, Communication, and Cryptography (TQC'11)}}}\
  (\bibinfo {year} {2011})\ pp.\ \bibinfo {pages} {35--46},\ \Eprint
  {http://arxiv.org/abs/arXiv:0909.4673} {arXiv:0909.4673} \BibitemShut
  {NoStop}%
\bibitem [{\citenamefont {Hoban}\ \emph {et~al.}(2014)\citenamefont {Hoban},
  \citenamefont {Wallman}, \citenamefont {Anwar}, \citenamefont {Usher},
  \citenamefont {Raussendorf},\ and\ \citenamefont {Browne}}]{hoban14}%
  \BibitemOpen
  \bibfield  {author} {\bibinfo {author} {\bibfnamefont {M.~J.}\ \bibnamefont
  {Hoban}}, \bibinfo {author} {\bibfnamefont {J.~J.}\ \bibnamefont {Wallman}},
  \bibinfo {author} {\bibfnamefont {H.}~\bibnamefont {Anwar}}, \bibinfo
  {author} {\bibfnamefont {N.}~\bibnamefont {Usher}}, \bibinfo {author}
  {\bibfnamefont {R.}~\bibnamefont {Raussendorf}}, \ and\ \bibinfo {author}
  {\bibfnamefont {D.~E.}\ \bibnamefont {Browne}},\ }\bibfield  {title}
  {\enquote {\bibinfo {title} {Measurement-{Based} {Classical}
  {Computation}},}\ }\href@noop {} {\bibfield  {journal} {\bibinfo  {journal}
  {Physical Review Letters}\ }\textbf {\bibinfo {volume} {112}},\ \bibinfo
  {pages} {140505} (\bibinfo {year} {2014})},\ \Eprint
  {http://arxiv.org/abs/arXiv:1304.2667} {arXiv:1304.2667} \BibitemShut
  {NoStop}%
\bibitem [{\citenamefont {Dawson}\ \emph {et~al.}(2005)\citenamefont {Dawson},
  \citenamefont {Haselgrove}, \citenamefont {Hines}, \citenamefont {Mortimer},
  \citenamefont {Nielsen},\ and\ \citenamefont {Osborne}}]{dawson05}%
  \BibitemOpen
  \bibfield  {author} {\bibinfo {author} {\bibfnamefont {C.}~\bibnamefont
  {Dawson}}, \bibinfo {author} {\bibfnamefont {H.}~\bibnamefont {Haselgrove}},
  \bibinfo {author} {\bibfnamefont {A.}~\bibnamefont {Hines}}, \bibinfo
  {author} {\bibfnamefont {D.}~\bibnamefont {Mortimer}}, \bibinfo {author}
  {\bibfnamefont {M.}~\bibnamefont {Nielsen}}, \ and\ \bibinfo {author}
  {\bibfnamefont {T.}~\bibnamefont {Osborne}},\ }\bibfield  {title} {\enquote
  {\bibinfo {title} {Quantum computing and polynomial equations over the finite
  field {$Z_2$}},}\ }\href@noop {} {\bibfield  {journal} {\bibinfo  {journal}
  {Quantum Inf. Comput.}\ }\textbf {\bibinfo {volume} {5}},\ \bibinfo {pages}
  {102--112} (\bibinfo {year} {2005})},\ \Eprint
  {http://arxiv.org/abs/quant-ph/0408129} {quant-ph/0408129} \BibitemShut
  {NoStop}%
\bibitem [{\citenamefont {Rudolph}(2009)}]{rudolph09}%
  \BibitemOpen
  \bibfield  {author} {\bibinfo {author} {\bibfnamefont {T.}~\bibnamefont
  {Rudolph}},\ }\bibfield  {title} {\enquote {\bibinfo {title} {A simple
  encoding of a quantum circuit amplitude as a matrix permanent},}\ }\href@noop
  {} {\bibfield  {journal} {\bibinfo  {journal} {Phys. Rev. A}\ }\textbf
  {\bibinfo {volume} {80}},\ \bibinfo {pages} {054302} (\bibinfo {year}
  {2009})},\ \Eprint {http://arxiv.org/abs/arXiv:0909.3005} {arXiv:0909.3005}
  \BibitemShut {NoStop}%
\bibitem [{\citenamefont {Fenner}\ \emph {et~al.}(1994)\citenamefont {Fenner},
  \citenamefont {Fortnow},\ and\ \citenamefont {Kurtz}}]{fenner94}%
  \BibitemOpen
  \bibfield  {author} {\bibinfo {author} {\bibfnamefont {S.}~\bibnamefont
  {Fenner}}, \bibinfo {author} {\bibfnamefont {L.}~\bibnamefont {Fortnow}}, \
  and\ \bibinfo {author} {\bibfnamefont {S.}~\bibnamefont {Kurtz}},\ }\bibfield
   {title} {\enquote {\bibinfo {title} {Gap-definable counting classes},}\
  }\href@noop {} {\bibfield  {journal} {\bibinfo  {journal} {J. Comput. Syst.
  Sci.}\ }\textbf {\bibinfo {volume} {48}},\ \bibinfo {pages} {116--148}
  (\bibinfo {year} {1994})}\BibitemShut {NoStop}%
\bibitem [{\citenamefont {Aaronson}(2011)}]{aaronson11a}%
  \BibitemOpen
  \bibfield  {author} {\bibinfo {author} {\bibfnamefont {S.}~\bibnamefont
  {Aaronson}},\ }\bibfield  {title} {\enquote {\bibinfo {title} {A
  linear-optical proof that the permanent is {\#}{P}-hard},}\ }\href@noop {}
  {\bibfield  {journal} {\bibinfo  {journal} {Proc. Roy. Soc. Ser. A}\ }\textbf
  {\bibinfo {volume} {467}},\ \bibinfo {pages} {3393--3405} (\bibinfo {year}
  {2011})},\ \Eprint {http://arxiv.org/abs/arXiv:1109.1674} {arXiv:1109.1674}
  \BibitemShut {NoStop}%
\end{thebibliography}%

\onecolumngrid
\appendix
%\large

%------------------------------------------------------------------------

\section{The Ising model and IQP circuits}
\label{appendix:Ising}
Consider the general Ising model:
\[ H_I= \sum_{i<j} w_{ij} z_i z_j +\sum_{i} v_i z_i \]
where  $z \in \{-1,+1\}^n$,  $i,j$ label vertices in the complete graph, $w_{ij}$ is a weight assigned to the edge $(i,j)$ and $v_i$ is a weight assigned to the vertex $i$. We want to show that the partition function of $H_I$  evaluated at $e^{i \theta}$, $Z= \text{Tr}[e^{i\theta H_I}]$, is proportional to an amplitude of an $n$-qubit IQP circuit. 

Imagine we have an IQP circuit constructed from the gates $e^{i\theta X\otimes X}$ and $e^{i\theta X}$, and consider the circuit amplitude
\[\langle 0|^{\otimes n} e^{i\theta(\sum_{i<j} w_{ij} X_i X_j +\sum_{i} v_i X_i)} |0\rangle^{\otimes n}\]
where X is a Pauli-X operator. Then
\bea
\langle 0|^{\otimes n} e^{i\theta(\sum_{i<j} w_{ij} X_i X_j +\sum_{i} v_i X_i)} |0\rangle^{\otimes n} & = &\langle 0|^{\otimes n} H^{\otimes n}e^{i \theta(\sum_{i<j} w_{ij} Z_i Z_j +\sum_{i} v_i Z_i} H^{\otimes n} |0\rangle^{\otimes n} \nonumber \\
& = &\frac{1}{2^n}\sum_{x,y \in\{0,1\}^n} \langle y| e^{i \theta(\sum_{i<j} w_{ij} Z_i Z_j +\sum_{i} v_i Z_i)} |x\rangle \nonumber  \\
& = &\frac{1}{2^n}\sum_{x\in \{0,1\}^n} e^{i \theta(\sum_{i<j} w_{ij} (-1)^{x_i x_j} +\sum_{i} v_i (-1)^{x_i})} \nonumber \\
& =  &\frac{1}{2^n}\sum_{z\in \{-1,1\}^n} e^{i\theta(\sum_{i<j} w_{ij} z_i z_j +\sum_{i} v_i z_i)} \nonumber \\
& = & \frac{1}{2^n} \text{Tr}[e^{i\theta H_I}]. \nonumber
\eea
If we consider IQP circuits constructed from the gates $e^{i\frac{\pi}{t}X\otimes X}$ and $e^{i\frac{\pi}{t}X}$, then it is clear that amplitudes of such circuits give rise to partition functions of the form $Z(\omega)=\text{Tr}[\omega^{H_I}]$ where $\omega=e^{i\frac{\pi}{t}}$ and $w_{ij},v_i\in \{0,\dots,2t-1\}$.

See the final paragraph of Appendix \ref{sec:tech} below for an alternative connection between IQP and the Ising model using a different gate set.

%-------------------------------------------------------------

\section{Gaps of degree-3 polynomials}
\label{sec:deg3}
We can also use IQP circuit amplitudes to express the gap of any degree-3 polynomial over $\mathbb{F}_2$ (with no constant term), $f:\{0,1\}^n\rightarrow \{0,1\}$:
\[ f(x) = \sum_{i,j,k} \alpha_{ijk} x_i x_j x_k + \sum_{i,j} \beta_{ij} x_i x_j + \sum_i \gamma_i x_i, \]
where $\alpha_{ijk}, \beta_{ij}, \gamma_i \in \{0,1\}$. 

This can be seen by noting the action of the Z, CZ, and CCZ gates on the computational basis states. Given a single qubit in a computational basis state, $|x_1\rangle$, the Pauli-Z gate has action $Z|x_1\rangle = (-1)^{x_1}|x_1\rangle$. Similarly we see $CZ_{12}|x_1x_2\rangle = (-1)^{x_1x_2}|x_1x_2\rangle$ and $CCZ_{123}|x_1x_2x_3\rangle = (-1)^{x_1x_2x_3}|x_1x_2x_3\rangle$. Then, in order to generate the phase $(-1)^{f(x)}|x\rangle$, we simply apply for each non-zero $\alpha_{ijk}, \beta_{ij}, \gamma_i$ in $f$ a corresponding CCZ, CZ, or Z gate to the $n$-qubit computational basis state, $|x\rangle$. Let $\mathcal{\widetilde{C}}_f$ denote this circuit and $\mathcal{C}_f = H^{\otimes n}\mathcal{\widetilde{C}}_fH^{\otimes n}$ denote its Hadamard transform, which is an IQP circuit.

We want to show that $\bra{0}^{\otimes n} \mathcal{C}_f \ket{0}^{\otimes n}$ is proportional to the gap of $f$. This is straightforward:
\[ \langle 0|^{\otimes n} H^{\otimes n} \mathcal{\widetilde{C}}_f H^{\otimes n}|0\rangle^{\otimes n} = \frac{1}{2^n} \sum_{x,y\in\{0,1\}^n}\langle y| \mathcal{\widetilde{C}}_f|x\rangle = \frac{1}{2^n} \sum_{x\in\{0,1\}^n} (-1)^{f(x)} = \frac{1}{2^n}\gap(f), \]
recalling that $\gap(f)$ is defined as $\gap(f) := |\{x: f(x)=0\}| - |\{x: f(x)=1\}|$. Note that similar ideas were previously used in~\cite{dawson05,rudolph09} for different classes of circuits.

% ------------------------------------------------------------------------------------

\section{The Hadamard gadget}
\label{sec:hadgadget}

Arbitrary Hadamard operations can be implemented in IQP using post-selection~\cite{bremner11} -- something that cannot be done efficiently in physical systems but nonetheless turns out to be useful as a mathematical tool. We perform Hadamard gates by introducing \emph{Hadamard gadgets} to our IQP circuits. Imagine that we have an $n$-qubit state $|\psi\rangle$ upon which we wish to apply a Hadamard gate to qubit $j$, and denote this by $H_j|\psi\rangle$. We now add an extra qubit $e$ initialised to the state $H_e|0\rangle_e$, recalling that all lines of an IQP circuit in the Z basis begin and end with a H. Then we perform some gates to produce the state $H_jCZ_{je}H_e|\psi\rangle|0\rangle_e$ and measure qubit $j$ in the computational basis, post-selecting on the $0$ outcome. We note that the gate $H_j$ must be present as this is an IQP circuit. It is easily checked that the effect of this protocol is to teleport the state of qubit $j$ to qubit $e$ while performing a Hadamard operation on this qubit. If we had desired to perform more gates on qubit $j$ we now perform them on qubit $e$. 

There are a few important things to note about the Hadamard gadget. Firstly for each ``intermediate" Hadamard to be performed there is a cost of 1 qubit, 1 CZ gate, and 1 qubit of post-selection. Secondly, if we did not perform post-selection there would be a probability of $1/2$ for this procedure to implement a Hadamard gate, so if a circuit had $m$ intermediate Hadamard gates the success probability would be $1/2^m$, making this procedure impractical. 

Finally, it is clear that the Hadamard gadget \emph{preserves multiplicative approximations of circuit amplitudes}. Imagine we had an $n$-qubit quantum circuit $\mathcal{U}$ which is expressed in terms of gates that are diagonal in the Z basis and $m$ intermediate $H$ gates. Then using the Hadamard gadget $m$ times we get the IQP circuit $\mathcal{C}$ such that $\langle0|^{\otimes n}\mathcal{U}|0\rangle^{\otimes n}=2^{m/2}\langle 0|^{\otimes(n+m)}\mathcal{C}|0\rangle^{\otimes(n+m)}$. Furthermore, if we had a multiplicative approximation, $A$, to this latter quantity within multiplicative error $\gamma$ then $|A-\langle 0|^{\otimes(n+m)}\mathcal{C}|0\rangle^{\otimes(n+m)}|\leq \gamma \langle 0|^{\otimes(n+m)}\mathcal{C}|0\rangle^{\otimes(n+m)}$. Then it is clear that $A' = 2^{m/2}A$ also provides a multiplicative approximation to $\langle0|^{\otimes n}\mathcal{U}|0\rangle^{\otimes n}$ within $\gamma$.

%-------------------------------------------------------------------------------

\section{Hardness of multiplicative approximations}
\label{sec:multiplicative}
We now show that approximating the squared gap of a degree-3 polynomial over $\F_2$ is $\GapP$-complete, even up to constant multiplicative error. $\GapP$ is a complexity class containing \#P~\cite{fenner94}. The majority of the proof is very similar to arguments used in other settings, such as for the permanent~\cite{aaronson11a} or the Ising model~\cite{goldberg14}. However, we believe that the connection to IQP again somewhat simplifies the argument. A similar argument could be used to prove hardness of multiplicative approximation for the Ising model; we omit this as it was shown in~\cite{goldberg14}.

\begin{prop}
Let $f$ be a degree-3 polynomial over $\F_2$. The problem of producing an estimate $\widetilde{z}$ such that $||\ngap(f)| - \widetilde{z}| \le \epsilon|\ngap(f)|$, for any $\epsilon < 1/2$, is $\GapP$-complete.
\end{prop}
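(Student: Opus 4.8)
The plan is to prove $\GapP$-completeness in two directions. Membership in $\GapP$ (i.e.\ showing $\gap(f)$, and hence crude information about $|\ngap(f)|$, can be computed by the difference of two $\sharpp$ functions) is immediate from the definition $\gap(f) = |\{x : f(x)=0\}| - |\{x : f(x)=1\}|$, since counting zeros and counting ones of a polynomial-time-computable predicate are both $\sharpp$ functions. So the real content is $\GapP$-hardness: I would show that an oracle producing a multiplicative $\epsilon$-approximation to $|\ngap(f)|$ for any $\epsilon<1/2$ lets one compute $\gap(f)$ exactly (and in fact lets one compute an arbitrary $\GapP$ function), hence such approximation is $\GapP$-hard.

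The key mechanism is that a multiplicative approximation to within error $<1/2$ distinguishes $\gap = 0$ from $|\gap| \ge 1$: if $\gap(f)=0$ the approximation must return $0$, whereas if $|\gap(f)|\ge 1$ it returns something nonzero. So the first step is to reduce \emph{exact} computation of $\gap(f)$ to the \emph{zero-testing} problem ``is $\gap(f)=0$?''. This is the standard binary-search / bit-extraction trick: given a target $\GapP$ function $g$, realized as $\gap(f)$ for a degree-3 polynomial $f$ on $n$ variables (using the dictionary of Appendix~\ref{sec:deg3} together with the Hadamard-gadget construction of Appendix~\ref{sec:hadgadget}, which lets IQP amplitudes express arbitrary quantum-circuit amplitudes up to a known power of $2$, and hence arbitrary $\GapP$ values up to normalization), one forms, for each integer $k$ in the relevant range $[-2^n, 2^n]$, a new degree-3 polynomial $f_k$ whose gap is $\gap(f) - k$; then $\gap(f)=k$ iff $\gap(f_k)=0$, which the approximation oracle detects. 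To keep the search to polynomially many oracle calls one does this bit by bit: the magnitude $|\gap(f)|$ has at most $n+1$ bits, and each bit can be pinned down by a constant number of zero-tests on suitably shifted/scaled polynomials, so $O(n)$ oracle queries suffice. The sign can then be recovered by one more shift.

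The step I expect to be the main obstacle is the arithmetic on gaps: constructing, from $f$ on $n$ variables, a degree-3 polynomial whose gap equals a prescribed affine or dilated function of $\gap(f)$ (subtracting a constant $k$, and multiplying by a power of $2$ to isolate high-order bits), while \emph{staying within degree $3$ and within $\F_2$}. Adding fresh ``dummy'' variables that $f$ does not depend on multiplies the gap by a power of two, which handles scaling; subtracting a constant is the delicate part, since one needs a degree-$\le 3$ polynomial on a few extra variables whose gap is a chosen integer $k$ with $|k|\le 2^n$, combined disjointly with $f$ so the gaps add (disjoint variable sets make $\gap(f \vee g) = \gap(f)\cdot\gap(g)$ up to the right normalization, or one uses the XOR structure so they add). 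I would assemble $k$ from a signed sum of such building blocks, being careful that the total number of variables stays polynomial and the degree never exceeds $3$. Everything else---the error analysis showing $\epsilon<1/2$ is exactly the threshold needed, and the reduction from a generic $\GapP$ instance to a degree-3 gap via the Hadamard gadget---is routine given the results already established in the excerpt.
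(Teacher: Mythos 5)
Your reduction has a genuine gap at its central step: you propose to use the approximation oracle \emph{only} as a zero-tester (``is $\gap(f_k)=0$?'') and then recover $\gap(f)$ by ``binary search / bit extraction''. But zero-tests of affinely shifted and scaled gaps are equality tests against a single candidate value: a query on a polynomial with gap $2^t\gap(f)-k$ tells you only whether $\gap(f)=k/2^t$. Equality tests give no order information (you cannot test ``is $\gap(f)\ge k$?''), so binary search is unavailable, and since $\gap(f)$ ranges over exponentially many values while each ``no'' answer eliminates only one candidate (or polynomially many, even if you allow products of gaps over disjoint variables), polynomially many such queries cannot determine $\gap(f)$. The claim that ``each bit can be pinned down by a constant number of zero-tests on suitably shifted/scaled polynomials'' is exactly the missing idea, and it is not repairable within the toolkit you allow yourself: ``bit $i$ of $|\gap(f)|$'' is not the zero set of any low-degree function of $\gap(f)$ you can construct. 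A symptom of the problem is that your argument never uses $\epsilon<1/2$ (zero-testing works for any $\epsilon<1$), whereas the threshold $1/2$ is essential to the statement.

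The paper's proof uses the \emph{magnitude} of the oracle's output, not just whether it is zero: given a current guess $c$, it obtains an estimate $\widetilde{d}$ of $|\ngap(f)-c|$, forms the two candidates $c\pm\widetilde{d}$, queries the oracle on both, and moves to whichever has the smaller estimated distance. Writing $\widetilde{d}=(1+\gamma)|c-\ngap(f)|$ with $|\gamma|\le\epsilon$, the correct branch is chosen whenever $(1+\epsilon)|\gamma| < (1-\epsilon)|2+\gamma|$, which holds precisely when $\epsilon<1/2$; the distance to $\ngap(f)$ then contracts by a factor $\le\epsilon$ per step, and since $\ngap(f)$ is a multiple of $2^{-n}$, $O(n)$ iterations suffice. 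Your secondary worry (realising $\gap(f)-c$ as the gap of a degree-3 polynomial without degree blow-up) is handled in the paper not by polynomial algebra but at the circuit level: a control bit multiplexes between a circuit for $f$ and a circuit $D_c$ with $\ngap(D_c)=-c$, giving $\ngap(f_c)=\tfrac12(\ngap(f)-c)$, and the whole thing is converted to an IQP amplitude (equivalently a degree-3 gap) via Toffoli gates and the Hadamard gadget, with truncation of $c$ to multiples of $2^{1-m}$ absorbing the fact that intermediate guesses need not be exactly representable. You gesture at this route, so that part is fixable; the search strategy is what must change.
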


\begin{proof}
Let $f:\{0,1\}^n \rightarrow \{0,1\}$ be a function described by a polynomial-size classical circuit $C$. By definition, computing $\ngap(f)$ for arbitrary functions $f$ is $\GapP$-complete~\cite{fenner94}. We now show that this problem can be reduced to approximating $|\ngap(f)|$ for degree-3 polynomials $f$, up to sufficiently high precision.

The first step is to show a connection between a ``shifted'' version of the general problem and IQP. For any $m = O(n)$ and for any $c \in [-1,1]$ such that $c$ is an integer multiple of $2^{1-m}$, there is a polynomial-size classical circuit $D_c$ on $m$ input bits computing a function $g$ such that $\ngap(g) = -c$. By adding a control bit which determines whether to execute $C$ or $D_c$, we obtain a polynomial-size classical circuit on $n+m+1$ bits computing a function $f_c$ such that $\ngap(f_c) = \frac{1}{2}(\ngap(f) - c)$.

By universality of the Toffoli gate for reversible classical computation, we can write down a quantum circuit $Q_c$ consisting of Toffoli, CNOT and X gates such that $Q_c$ performs the map
\[ \ket{0}^{\otimes a} \ket{x} \ket{0} \mapsto \ket{0}^{\otimes a} \ket{x} \ket{f_c(x)}, \]
where the first register contains $a$ ancilla qubits, with $a = \poly(n)$. Let the circuit $Q'_c$ be defined as follows: start in the state $\ket{0}^{\otimes a} \ket{0}^{\otimes (n+m+1)} \ket{0}$; apply an X gate to the third register; apply Hadamard gates to the second and third registers; execute $Q_c$; apply Hadamard and X gates to the third register; and finally apply Hadamard gates to the second register. The state near the end of this circuit, just before the last Hadamard gates are applied, is
\[ \ket{0}^{\otimes a} \left( \frac{1}{\sqrt{2^{n+m+1}}} \sum_{x \in \{0,1\}^{n+m+1}} (-1)^{f_c(x)} \ket{x} \right) \ket{0}. \]
Therefore, $\bracket{0}{Q'_c}{0} = \ngap(f_c) = \frac{1}{2}(\ngap(f) - c)$, where we henceforth abbreviate $\ket{0}^{\otimes n} = \ket{0}$ for any integer $n$. As discussed in Appendix \ref{sec:hadgadget}, there exists an IQP circuit $Q''_c$ such that $\bracket{0}{Q'_c}{0} = \alpha \bracket{0}{Q'_c}{0}$ for some easily computed real number $\alpha$. An approximation of $\bracket{0}{Q''_c}{0}$ up to multiplicative error $\epsilon$ therefore implies an approximation of $\bracket{0}{Q'_c}{0}$ up to the same error.

Assume we have a deterministic algorithm $\mathcal{A}$ such that, on taking as input an arbitrary IQP circuit $C$, $\mathcal{A}$ outputs $\widetilde{z}$ such that $||\bracket{0}{C}{0}| - \widetilde{z}| \le \epsilon |\bracket{0}{C}{0}|$, for some $\epsilon<1$. We will use $\mathcal{A}$ to output estimates $\widetilde{d}$ such that $|\widetilde{d} - |\ngap(f) - c|| \le \epsilon |\ngap(f) - c|$. Observe that $\mathcal{A}$ can be used to distinguish with certainty between the two cases that $c = \ngap(f)$, and $c \neq \ngap(f)$. So if we can guess $c = \ngap(f)$, $\mathcal{A}$ certifies that our guess is correct.

We now perform the following sequence of steps to compute $\ngap(f)$. At each step, we have a guess $c$ for $\ngap(f)$, starting with $c = 0$. We use $\mathcal{A}$ to output an estimate $\widetilde{d}$ of $|\ngap(f) - c|$. We then consider the two possibilities $c\pm\widetilde{d}$. We apply $\mathcal{A}$ to estimate $|\ngap(f) - (c \pm \widetilde{d})|$, obtaining outputs $\widetilde{d}^+$, $\widetilde{d}^-$, and choose as our new value of $c$ either $c + \widetilde{d}$ (if $\widetilde{d}^+ \le \widetilde{d}^-$) or otherwise $c - \widetilde{d}$. We then repeat until $c = \ngap(f)$.

Write $\widetilde{d} = (1+\gamma) |c-\ngap(f)|$, for some $\gamma$ such that $|\gamma| \le \epsilon$. If $c < \ngap(f)$, we have
\[ |c + \widetilde{d} - \ngap(f)| = |c + (1+\gamma)(\ngap(f)-c) - \ngap(f)| = |\gamma(\ngap(f)-c)| \le \epsilon|c-\ngap(f)|, \]
so if we choose $c + \widetilde{d}$ for the next step, the distance between $c$ and $\ngap(f)$ is reduced by multiplying it by a factor smaller than $\epsilon$. A similar calculation holds in the case $c > \ngap(f)$ if we choose $c-\widetilde{d}$ for the next step. As $\ngap(f)$ is an integer multiple of $2^{-n}$, if we pick the correct choice at each step, we need only $O(n)$ steps to obtain $c = \ngap(f)$. We now show that, for sufficiently small $\epsilon$, the correct choice is indeed picked.

Assuming $c < \ngap(f)$, we would like $\widetilde{d}^+ < \widetilde{d}^-$. This will hold if
\beas
&& (1+\epsilon)|c + \widetilde{d} - \ngap(f)| < (1-\epsilon) |c - \widetilde{d} - \ngap(f)|\\
&\Leftrightarrow& (1+\epsilon)| c + (1+\gamma) |c-\ngap(f)| - \ngap(f) | < (1-\epsilon)| c - (1+\gamma) |c-\ngap(f)| - \ngap(f) |\\
%&\Leftrightarrow& (1+\epsilon)| c + (1+\gamma)(\ngap(f)-c) - \ngap(f) | < (1-\epsilon)| c - (1+\gamma)(\ngap(f)-c) - \ngap(f) |\\
&\Leftrightarrow& (1+\epsilon)|\gamma(\ngap(f)-c)| < (1-\epsilon)|(2+\gamma)(\ngap(f)-c)|\\
&\Leftrightarrow& (1+\epsilon)|\gamma| < (1-\epsilon)|2+\gamma|.
\eeas
This inequality is easily seen to hold for all $\gamma$ such that $|\gamma| \le \epsilon < 1/2$. The case $c > \ngap(f)$ is similar.

One omission in this argument is that we can only approximate $|\ngap(f) - c|$ for values $c$ which are integer multiples of $2^{1-m}$, but depending on the estimates output by $\mathcal{A}$, we may want to use values of $c$ which are not integer multiples. In this case, we simply truncate $c$ to the nearest integer multiple of $2^{1-m}$. For sufficiently large $m = O(n)$, the additional error introduced by this truncation is negligibly small.
\end{proof}

This immediately implies hardness of multiplicatively approximating $\ngap(f)^2$ to the same accuracy, because
\[ |\ngap(f)^2 - \widetilde{z}^2| \le \epsilon\ngap(f)^2 \Rightarrow ||\ngap(f)| - \widetilde{z}|(|\ngap(f)| + \widetilde{z}) \le \epsilon\ngap(f)^2 \Rightarrow ||\ngap(f)| - \widetilde{z}| \le \epsilon |\ngap(f)|. \]
 
 %------------------------------------------------------------------------
 
\section{From additive to multiplicative error}
\label{sec:addmult}

An essential ingredient of our argument connecting sampling with additive and multiplicative error (Lemma \ref{lem:approx}) is a classic result of complexity theory, the so-called Stockmeyer Counting Theorem. Informally it states that for every function inside \#P, there is a good multiplicative approximation within $\FBPP^{\NP}$:

\begin{thm}[Stockmeyer~\cite{stockmeyer85}, see~\cite{aaronson13} for statement here]
\label{thm:stockmeyer}
There exists an $\FBPP^{\NP}$ machine which, for any boolean function $f:\{0,1\}^n \rightarrow \{0,1\}$, can approximate
\[ p = \Pr_x [ f(x) = 1 ] = \frac{1}{2^n} \sum_{x \in \{0,1\}^n} f(x) \]
to within a multiplicative factor of $(1+\epsilon)$ for all $\epsilon = \Omega(1/\poly(n))$, given oracle access to $f$.
\end{thm}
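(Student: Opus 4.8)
The plan is to prove this by the standard approximate-counting paradigm: hash the set of witnesses into exponentially many buckets, so that a typical bucket holds only polynomially many witnesses, and then use the $\NP$ oracle to count the contents of one bucket exactly. Write $S := \{x \in \{0,1\}^n : f(x) = 1\}$ and $N := |S| = 2^n p$; it suffices to produce a $(1\pm\epsilon)$-multiplicative estimate of $N$. Here $f$ is supplied as an oracle (or, as in our applications, as an explicit polynomial-size circuit), and the $\NP$ predicates below may query it.

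First I would isolate the counting primitive. For a fixed affine map $h_{A,b}:\{0,1\}^n \to \{0,1\}^\ell$, $h_{A,b}(x) = Ax + b$ over $\F_2$, and any threshold $t$, the statement ``there exist $t$ distinct strings $x_1,\dots,x_t \in S$ with $h_{A,b}(x_i) = 0$ for all $i$'' is an $\NP$ predicate: the witness is the list $x_1,\dots,x_t$ together with the (oracle- or circuit-checkable) verifications $f(x_i) = 1$. Hence, given $A,b$, the $\NP$ oracle lets us compute the bucket load $Y := |\{x \in S : h_{A,b}(x) = 0\}|$ exactly by binary search on $t$, using $O(n)$ queries — provided we abort with an ``overflow'' flag whenever $t$ would exceed a fixed polynomial cutoff $C = \Theta(1/\epsilon^2)$, so that the oracle is only ever queried for polynomially bounded $t$ (legitimate since $\epsilon = \Omega(1/\poly(n))$).

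Next comes the probabilistic core. Drawing $A,b$ uniformly at random makes $\{h_{A,b}\}$ a pairwise independent hash family, so for the fixed target $0$ we get $\E[Y] = N/2^\ell$ and, because any two distinct strings land in the target bucket simultaneously with probability exactly $2^{-2\ell}$, $\E[Y^2] \le \E[Y] + \E[Y]^2$, i.e.\ $\Var[Y] \le \E[Y]$. Chebyshev's inequality then gives $\Pr\big[\,|Y - N/2^\ell| > \epsilon\, N/2^\ell\,\big] \le 2^\ell/(\epsilon^2 N) = 1/(\epsilon^2\,\E[Y])$, so at a scale $\ell$ with $\E[Y] = \Theta(1/\epsilon^2)$ a single random hash yields $2^\ell Y \in [(1-\epsilon)N,(1+\epsilon)N]$ with probability at least $3/4$, and on that event $Y = O(1/\epsilon^2)$ stays below the cutoff and is read off exactly. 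Since $N$ — hence the good scale $\ell$ — is unknown, I would sweep $\ell = 0,1,2,\dots,n$, running $O(n)$ independent trials at each $\ell$ and taking the median load: at $\ell = 0$ one recovers $N$ directly if it is below the cutoff; otherwise the medians overflow until $\ell$ enters the regime $N/2^\ell = \Theta(1/\epsilon^2)$, where the median load first drops below the cutoff and $2^\ell \cdot(\text{median load})$ is the required $(1\pm\epsilon)$-estimate. Median amplification over $O(n)$ trials makes the overall failure probability exponentially small, and the whole procedure clearly lies in $\FBPP^{\NP}$, using randomness only to pick the hash matrices.

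I expect the main obstacle to be obtaining a genuinely \emph{multiplicative} $(1+\epsilon)$ error, as opposed to the constant-factor accuracy that crude hashing delivers: this is precisely what forces the expected bucket load up to $\Theta(1/\epsilon^2)$ and relies on the sharp second-moment bound $\Var[Y] \le \E[Y]$ together with the fact that the $\NP$ oracle can report the (polynomially bounded) load \emph{exactly} rather than merely testing nonemptiness. The remaining ingredients — choosing the hash family, locating the correct scale $\ell$ without knowing $N$, handling the overflow bookkeeping, and the median amplification — are routine once this estimate is in place.
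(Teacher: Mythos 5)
The paper does not prove Theorem~\ref{thm:stockmeyer} at all: it is imported as a classical result, cited to Stockmeyer~\cite{stockmeyer85} with the statement taken from~\cite{aaronson13}, and your pairwise-independent-hashing argument (random affine maps over $\F_2$, exact bucket counts via binary search on an $\NP$ predicate, Chebyshev at a scale where the expected load is $\Theta(1/\epsilon^2)$, and median amplification over a sweep of scales) is essentially the standard proof of that result and is correct in outline. Two small points to keep straight if you flesh it out: when $f$ is available only as an oracle the oracle in $\FBPP^{\NP}$ must really be $\NP$ relative to $f$ (harmless here, since in the paper's applications $f$ is an explicit polynomial-size circuit), and the overflow cutoff and stopping rule in the scale sweep need their constants chosen so that the level $\ell$ you settle on provably has $\E[Y]\ge c/\epsilon^2$ while the concentration event keeps $Y$ below the cutoff --- routine bookkeeping, as you say, but it is where the multiplicative (rather than constant-factor) guarantee is actually cashed in.
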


We also include here a corollary of Lemma \ref{lem:approx}, which formalises the discussion in the main text.

\begin{cor}
\label{cor:addapprox}
Let $\mathcal{F}$ be a family of IQP circuits on $n$ qubits. Pick a random circuit $\mathcal{C}$ by choosing a circuit from $\mathcal{F}$ uniformly at random, then appending X gates on a uniformly random subset of the qubits. Assume that there exist universal constants $\alpha,p>0$ such that $\Pr[|\bracket{0}{\mathcal{C}}{0}|^2 \ge \alpha\cdot2^{-n}] \ge p$. Further assume there exists a classical polynomial-time algorithm $\mathcal{A}$ which, for any IQP circuit $\mathcal{C}'$, can sample from a probability distribution which approximates the output probability distribution of $\mathcal{C}'$ up to additive error $\epsilon = \alpha p / 8$ in $\ell_1$ norm. Then there is a $\FBPP^{\NP}$ algorithm which, given access to $\mathcal{A}$, approximates $|\bracket{0}{\mathcal{C}}{0}|^2$ up to multiplicative error $1/4 + o(1)$ on at least a $p/2$ fraction of circuits $\mathcal{C}$.
\end{cor}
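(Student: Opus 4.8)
The plan is to instantiate Lemma \ref{lem:approx} with a carefully chosen failure probability and then combine its additive-error guarantee with the anticoncentration hypothesis via a union bound. Write a random circuit $\mathcal{C}$ from the statement as $\mathcal{C} = \mathcal{D}_x$, where $\mathcal{D}$ is drawn uniformly from $\mathcal{F}$ and $x \in \{0,1\}^n$ is the (uniformly random) indicator of the subset of qubits receiving an X gate. For each fixed $\mathcal{D}$, apply Lemma \ref{lem:approx} with $\mathcal{D}$ playing the role of the circuit ``$\mathcal{C}$'' there and with the parameter $\delta := p/2$; this is legitimate because the sampler $\mathcal{A}$ is assumed accurate on \emph{every} IQP circuit, in particular on the X-free circuits $\mathcal{D} \in \mathcal{F}$. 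The lemma then yields an $\FBPP^{\NP}$ algorithm which, on input $\mathcal{C}=\mathcal{D}_x$, outputs an estimate of $|\bracket{0}{\mathcal{D}_x}{0}|^2 = |\bracket{0}{\mathcal{C}}{0}|^2$; using the explicit bound from the proof of Lemma \ref{lem:approx} (in which the coefficient of $\epsilon/(2^n\delta)$ is $1+1/\poly(n)$, not a general $O(1)$) rather than the loose $O(\cdot)$ form, the additive error is at most
\[ \frac{|\bracket{0}{\mathcal{C}}{0}|^2}{\poly(n)} + \frac{\epsilon\,(1+1/\poly(n))}{2^n \delta} \]
except for $x$ in a set of measure at most $\delta = p/2$. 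Since this holds for every fixed $\mathcal{D}$, it holds for all but a $p/2$ fraction of circuits $\mathcal{C}$ over the joint choice of $\mathcal{D}$ and $x$.

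Next I would intersect this ``good'' set with the anticoncentration event $E := \{\,|\bracket{0}{\mathcal{C}}{0}|^2 \ge \alpha\,2^{-n}\,\}$, which has probability at least $p$ by hypothesis. Since the additive bound fails with probability at most $\delta = p/2$, both events hold simultaneously on at least a $p - p/2 = p/2$ fraction of circuits. On this fraction the substitution $\epsilon = \alpha p/8$, $\delta = p/2$ gives
\[ \frac{\epsilon\,(1+1/\poly(n))}{2^n \delta} = \frac{\alpha}{4}\,2^{-n}\,(1+1/\poly(n)) \le \frac{1}{4}\,|\bracket{0}{\mathcal{C}}{0}|^2\,(1+1/\poly(n)), \]
where the inequality uses $\alpha\,2^{-n} \le |\bracket{0}{\mathcal{C}}{0}|^2$ on $E$. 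Adding the $|\bracket{0}{\mathcal{C}}{0}|^2/\poly(n)$ term, the total additive error is at most $(1/4 + o(1))\,|\bracket{0}{\mathcal{C}}{0}|^2$, i.e.\ a multiplicative approximation to within $1/4 + o(1)$, which is what is claimed.

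The mathematics here is essentially a substitution of constants; the only delicate points are bookkeeping. First, one must be careful that Lemma \ref{lem:approx} is applied in the two-stage random model correctly: the ``probability $\ge 1-\delta$ over the choice of $x$'' guarantee, which holds for each fixed $\mathcal{D}$, must be averaged over $\mathcal{D}$ to obtain a ``probability $\ge 1-\delta$ over the joint choice of $(\mathcal{D},x)$'' statement, so that the union bound against $E$ (whose probability is also over the joint choice) is valid. Second, one should carry through the explicit $1+1/\poly(n)$ factors from the proof of Lemma \ref{lem:approx} rather than the opaque $O(\cdot)$, since the target constant $1/4 + o(1)$ is exact; I would therefore re-derive that one line of Lemma \ref{lem:approx} inline. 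Neither of these is a real obstacle, so the main ``difficulty'' is simply being careful about which randomness each probability statement is taken over.
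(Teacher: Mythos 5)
Your proposal is correct and follows essentially the same route as the paper: set $\epsilon=\alpha p/8$, $\delta=p/2$ in Lemma \ref{lem:approx}, note the resulting additive term $\epsilon/(2^n\delta)=\alpha\cdot 2^{-n-2}$ is at most a quarter of $|\bracket{0}{\mathcal{C}}{0}|^2$ on the anticoncentration event, and combine the two events by a union bound to retain a $p/2$ fraction. Your extra care about applying the lemma per fixed $\mathcal{D}$ and averaging, and about the $1+1/\poly(n)$ constant hidden in the $O(\cdot)$, is just explicit bookkeeping of what the paper's proof does implicitly.
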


\begin{proof}
With probability at least $p$ over the choice of $\mathcal{C}$, $|\bracket{0}{\mathcal{C}}{0}|^2 \ge \alpha \cdot 2^{-n}$.
Setting $\epsilon = \alpha p / 8$, $\delta = p/2$ in Lemma \ref{lem:approx}, there is a $\FBPP^{\NP}$ algorithm which, given access to $\mathcal{A}$, approximates $|\bracket{0}{\mathcal{C}}{0}|^2$ up to additive error $O((1+o(1))\alpha \cdot 2^{-n-2} + |\bracket{0}{\mathcal{C}}{0}|^2/\poly(n))$ with probability at least $1-p/2$ (over the choice of $\mathcal{C}$). Therefore, with probability at least $p/2$, the algorithm approximates $|\bracket{0}{\mathcal{C}}{0}|^2$ up to multiplicative error $1/4 + o(1)$.
\end{proof}

%------------------------------------------------------------------------
 
\section{Proofs of anticoncentration bounds}
\label{sec:tech}

Here we prove the lemmas required for the anticoncentration bounds claimed in the main body of the paper. Our main technical tool will be the following result:

\begin{lem}
\label{lem:tech}
Let $\omega = e^{2 \pi i / r}$, $\eta = e^{2 \pi i / s}$ for some integers $r$ and $s$ such that $r,s\ge 2$, and if $s=2$ then $r=2$. Fix integer $n$. For pairs of integers $i<j$ between 1 and $n$, let $\alpha_{ij}$ be picked uniformly at random from $\{0,\dots,r-1\}$. For $k \in \{1,\dots,n\}$, let $\beta_k$ be picked uniformly at random from $\{0,\dots,s-1\}$. Then
\[ \sum_{w,x,y,z \in \{0,1\}^n} \left| \E_{\alpha} \left[ \omega^{\sum_{i<j} \alpha_{ij}(w_i w_j + x_i x_j - y_i y_j - z_i z_j)} \right] \E_\beta \left[\eta^{\sum_k \beta_k(w_k+x_k - y_k - z_k)} \right]\right| \le 3 \cdot 2^{2n}. \]
\end{lem}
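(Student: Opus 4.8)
The plan is to evaluate the two expectations separately and then bound the resulting sum. For the $\alpha$ part, note that the $\alpha_{ij}$ are independent across edges $(i,j)$, so $\E_\alpha[\omega^{\sum_{i<j}\alpha_{ij}(w_iw_j+x_ix_j-y_iy_j-z_iz_j)}]$ factorises as $\prod_{i<j}\E_{\alpha_{ij}}[\omega^{\alpha_{ij}(w_iw_j+x_ix_j-y_iy_j-z_iz_j)}]$. Each factor is an average of an $r$-th root of unity over a complete residue system, hence equals $1$ if $r \mid (w_iw_j+x_ix_j-y_iy_j-z_iz_j)$ and $0$ otherwise. Since each of $w_iw_j, x_ix_j, y_iy_j, z_iz_j \in \{0,1\}$, the combination $w_iw_j+x_ix_j-y_iy_j-z_iz_j$ lies in $\{-2,-1,0,1,2\}$, so for $r \ge 3$ the factor is nonzero iff this quantity is exactly $0$, i.e. $w_iw_j+x_ix_j = y_iy_j+z_iz_j$; for $r=2$ it is nonzero iff the quantity is even, i.e. $w_iw_j+x_ix_j \equiv y_iy_j+z_iz_j \pmod 2$. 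In either case the magnitude of the whole $\alpha$-expectation is an indicator: it equals $1$ if a certain system of equations on the bits holds for every edge, and $0$ otherwise. The same analysis applies verbatim to the $\beta$ part with $s$ in place of $r$ and the single-bit combinations $w_k+x_k-y_k-z_k \in \{-2,-1,0,1,2\}$: the magnitude is $1$ iff $w_k+x_k = y_k+z_k$ for all $k$ (when $s\ge 3$) or $w_k+x_k\equiv y_k+z_k\pmod 2$ (when $s=2$, which by hypothesis forces $r=2$).

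Thus the left-hand side counts the number of quadruples $(w,x,y,z)\in(\{0,1\}^n)^4$ satisfying, for all $k$, a linear condition $L_k(w,x,y,z)$ on the single bits, and for all $i<j$, a quadratic condition $Q_{ij}(w,x,y,z)$ on the products. I would then show this count is at most $3\cdot 2^{2n}$. The key observation is that the vertex (linear) conditions already cut the quadruples down enough: for each coordinate $k$, the constraint $w_k+x_k=y_k+z_k$ (resp. the mod-2 version) is a condition on the $4$-bit tuple $(w_k,x_k,y_k,z_k)$, which by itself has $16$ possibilities. Over the $\mathbb{Z}$ case, $w_k+x_k=y_k+z_k$ is satisfied by $6$ of the $16$ tuples; over the mod-2 case, $w_k+x_k\equiv y_k+z_k$ is satisfied by $8$ of the $16$. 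Either way this alone gives at most $8^n$ quadruples, which is too weak — I need to also use the edge conditions, or argue more cleverly.

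The cleaner route is to first sum over $w$ and $x$ for fixed $y,z$, or rather to organise the bound as follows. Observe that the left-hand side equals $\sum_{w,x,y,z} [\text{all }L_k][\text{all }Q_{ij}]$, and I claim this is $\le \sum_{w,x} \big(\text{number of }(y,z)\text{ with }y_ky_j+z_kz_j = w_kw_j+x_kx_j \ \forall i<j \text{ and } y_k+z_k=w_k+x_k\ \forall k\big)$ in the $\mathbb{Z}$ case (and the analogous mod-2 statement otherwise). The inner count is exactly the kind of quantity that appears in the anticoncentration computation: fixing the "target" values $t_k := w_k+x_k \in \{0,1,2\}$ and $T_{ij} := w_iw_j+x_ix_j$, one counts pairs $(y,z)$ of bit-strings realising these. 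When $t_k \in \{0,2\}$ the pair $(y_k,z_k)$ is forced; the free coordinates are those with $t_k=1$, and on those the constraint $y_ky_j+z_kz_j = T_{ij}$ becomes (after substituting the forced values) a system of $\mathbb{F}_2$-linear or affine-quadratic equations whose solution set I bound by $3\cdot 2^{(\#\text{free})}$ using a rank argument — essentially the same Gram/bilinear-form estimate used to prove $\E_f[\ngap(f)^4]\le 3\cdot 2^{-2n}$. Summing over $w,x$ with the appropriate weights then yields $3 \cdot 2^{2n}$. The main obstacle is precisely this last counting step: controlling the number of solutions to the coupled quadratic constraints $Q_{ij}$ on the free coordinates and extracting the constant $3$. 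I would handle it by reducing, via the substitution $y \to y, z\to z$ and completing squares over $\mathbb{F}_2$, to bounding $|\{(u,v): u\cdot u\,\text{-type form} = \text{const}\}|$, i.e. counting zeros of a quadratic form over $\mathbb{F}_2$, where the classical bound on the number of zeros of a rank-$\rho$ quadratic form ($\le 2^{m-1} + 2^{m-1-\lceil \rho/2\rceil}$, comfortably within a factor $3$) closes the argument; and I would treat the degenerate $s=2,r=2$ case, where everything is mod $2$, uniformly by the same quadratic-form count.
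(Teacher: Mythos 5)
Your first stage is fine and matches the spirit of the paper: both expectations factorise over edges/vertices, each factor is $0$ or $1$, so the left-hand side is exactly the number of quadruples $(w,x,y,z)$ satisfying the vertex conditions $w_k+x_k\equiv y_k+z_k \pmod s$ and the edge conditions $w_iw_j+x_ix_j\equiv y_iy_j+z_iz_j\pmod r$. The genuine gap is in the counting step that you yourself flag as the main obstacle, and the tool you propose there cannot close it. For fixed $(w,x)$ at Hamming distance $d$, the vertex constraints alone leave exactly $2^{d}$ candidate pairs $(y,z)$ (your $\#\mathrm{free}=d$), so your claimed per-pair bound $3\cdot 2^{\#\mathrm{free}}$ is weaker than the trivial bound, and summing it over $(w,x)$ gives $3\sum_{w,x}2^{d(w,x)}=3\cdot 2^n\cdot 3^n=3\cdot 6^n$, exponentially larger than the target $3\cdot 2^{2n}$. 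The same objection applies to the quadratic-form zero-count $2^{m-1}+2^{m-1-\lceil\rho/2\rceil}$: a single quadratic equation over $\F_2$ is satisfied by roughly half of all points, so no bound of that shape can deliver the $O(1)$ count per $(w,x)$ that the arithmetic requires. What is actually needed is that the full system of $\binom{n}{2}$ edge equations is simultaneously satisfiable for at most \emph{two} values of $y$ once $w\neq x$, i.e.\ that the nonzero terms force two of $w,x,y$ to coincide.

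That is precisely what the paper proves, by a short combinatorial argument rather than a rank estimate: the vertex constraints force $z\equiv w+x-y \pmod s$, hence $z=w+x-y$ exactly when $s\ge 3$ (and $r=s$ when $s=2$, which is where the hypothesis ``if $s=2$ then $r=2$'' is used — note your plan never actually invokes it); substituting this into the edge exponent, one shows that if $w,x,y$ are pairwise distinct then there is a pair $(i,j)$ — take $i$ with $w_i\neq x_i$, then a $j$ witnessing $y\neq w$ or $y\neq x$ depending on the value of $y_i$ — for which the exponent is $\pm 1\not\equiv 0\pmod r$, so that edge factor vanishes. Consequently nonzero terms have $w=x$, $w=y$ or $x=y$ with $z$ determined, giving at most $3\cdot 2^{2n}$ terms. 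Your proposal is missing this ``two of $w,x,y$ must coincide'' step (or any equivalently strong structural statement about the coupled edge constraints), and the Gram/rank argument offered in its place is quantitatively incapable of substituting for it.
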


\begin{proof}
Each term in the sum is clearly upper-bounded by 1. Our goal will be to show that many terms in the sum are actually zero, using the identities $\E_\alpha[\omega^{\alpha t}] = 0$ for any integer $t \not\equiv 0$ mod $r$, and similarly $\E_\beta[\eta^{\beta t}] = 0$ for $t \not\equiv 0$ mod $s$. First,
\[ \E_\beta\left[\eta^{\sum_k \beta_k(w_k+x_k - y_k - z_k)} \right] = \prod_{k=1}^n \E_{\beta_k}[\eta^{\beta_k(w_k+x_k - y_k - z_k})], \]
which equals 0 unless $z \equiv w + x - y$ mod $s$. If this is the case, it also holds that $z \equiv w + x - y$ mod $r$. To see this, observe that for $s \ge 3$, as $w+x-y \in \{-1,0,1,2\}^n$, $z \equiv w + x - y$ mod $s$ implies that in fact $z = w + x - y$. If $s = 2$, then $r = s$ by assumption and the claim is immediate. Hence
\beas
\E_{\alpha} \left[ \omega^{\sum_{i<j} \alpha_{ij}(w_i w_j + x_i x_j - y_i y_j - z_i z_j)} \right]  &=& \E_{\alpha} \left[ \omega^{\sum_{i<j} \alpha_{ij}(w_i w_j + x_i x_j - y_i y_j - (w+x-y)_i (w+x-y)_j)} \right] \\
&=& \prod_{i<j} \E_{\alpha_{ij}} \left[ \omega^{\alpha_{ij}(w_i (y_j-x_j)+x_i(y_j-w_j)+y_i(w_j+x_j)-2y_iy_j)} \right].
\eeas
We claim that if $w$, $x$ and $y$ are all distinct, then there exists a pair $(i,j)$ (also distinct) such that
\be \label{eq:mod}  w_i(y_j-x_j) + x_i(y_j-w_j) + y_i(w_j+x_j)  \not\equiv 2y_iy_j  \text{ mod } r, \ee
which implies that the product evaluates to 0. Since $w \not= x$, let $i$ be such that $w_i \not= x_i$.  We may assume without loss of generality that $w_i=0$ and $x_i=1$, since the other case just relabels $w$ for $x$ and vice versa.  Therefore it remains to find a $j>i$ such that
\be \label{eq:mod2}  y_j - w_j + y_i(w_j+x_j)  \not\equiv 2y_iy_j  \text{ mod } r.  \ee
In fact, finding a $j<i$ is just as good, because the expression (\ref{eq:mod}) is also symmetric under exchange of $i$ and $j$. If $y_i=0$ then since $y \not= w$, there must exist some $j$ such that $y_j \not= w_j$.  Moreover, $j \not= i$ because $y_j$ cannot differ from both $w_i$ and $x_i$ (which we have already seen are different).  So $y_j = 0$, $w_j = 1$ (or vice versa), they cannot be equivalent modulo $r$, and (\ref{eq:mod2}) holds in this case. On the other hand, if $y_i=1$ then since $y \not= x$, there must be some $j$ such that $y_j \not= x_j$, and $j \not= i$ because we already chose $x_i = 1$.  So $y_j = 0$, $x_j = 1$ (or vice versa), they are not equivalent modulo $r$, and (\ref{eq:mod2}) holds in this case too.

%We claim that, if $w \neq x \neq y$, there exists a pair of distinct indices $(i,j)$ such that
%
%\[ g_{ij} := w_i (y_j-x_j)+x_i(y_j-w_j)+y_i(w_j+x_j) \not\equiv 2y_iy_j \text{ mod } r, \]
%
%which implies that the whole product evaluates to 0. As $y_i y_j \in \{0,1\}$, to prove this claim it suffices to show that $g_{ij} = \pm1$. Let $i$ be such that $x_i \neq y_i$ and assume without loss of generality that $x_i = 0$, $y_i = 1$. Then, for each $j$, $g_{ij} = w_i (y_j-x_j) + w_j + x_j$. If $w_i = 0$, let $j$ be such that $w_j \neq x_j$; if $w_i = 1$, let $j$ be such that $w_j \neq y_j$. Such a position $j$ must exist in each case because $w \neq x$ and $w \neq y$, and in either case $j \neq i$. In the first case, $g_{ij} = w_j + x_j = 1$; in the second case, $g_{ij} = y_j  + w_j = 1$.

We have therefore shown that each term in the sum in the statement of the lemma evaluates to 0 unless both $z \equiv w + x - y$ mod $s$, and also $w=x$, $w=y$ or $x=y$. The number of ways of choosing strings $w$, $x$, $y$, $z$ to comply with this constraint is at most $3 \cdot 2^{2n}$. This completes the proof.
\end{proof}

% -------------------------------------------

We first apply Lemma \ref{lem:tech} to random degree-3 polynomials. A degree-3 polynomial $f:\{0,1\}^n \rightarrow \{0,1\}$ can be picked uniformly at random in a straightforward manner. Assuming that $f(0^n) = 0$ without loss of generality and writing
\[ f(x) = \sum_{i,j,k} \alpha_{ijk} x_i x_j x_k + \sum_{i,j} \beta_{ij} x_i x_j + \sum_i \gamma_i x_i, \]
where $\alpha_{ijk}, \beta_{ij}, \gamma_i \in \{0,1\}$, we simply pick each of these coefficients uniformly at random. Observe that the random $\gamma_i$ coefficients correspond to applying X gates to a random subset of the qubits. We now show that, for each possible choice of the $\alpha_{ijk}$ coefficients, the expected value of $\ngap(f)^4$ (taken over the $\beta$, $\gamma$ coefficients) is low.

\begin{lem}
\label{lem:degle2random}
Let $f:\{0,1\}^n \rightarrow \{0,1\}$ be a polynomial over $\F_2$ whose degree $\le 2$ part is uniformly random. Then $\E_f[\ngap(f)^4] \le 3 \cdot 2^{-2n}$.
\end{lem}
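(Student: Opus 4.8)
The plan is to express $\E_f[\ngap(f)^4]$ as an average of a product of two independent averages, one over the quadratic coefficients $\beta_{ij}$ and one over the linear coefficients $\gamma_i$, and then invoke Lemma \ref{lem:tech} with $r=s=2$. Concretely, writing $\ngap(f) = 2^{-n} \sum_{x} (-1)^{f(x)}$, we have
\[
\ngap(f)^4 = 2^{-4n} \sum_{w,x,y,z \in \{0,1\}^n} (-1)^{f(w) + f(x) + f(y) + f(z)}.
\]
Taking the expectation over $f$ and using independence of the coefficient families, the cubic part contributes a fixed sign (or, if we also average over $\alpha$, another product of $\pm 1$ expectations), while the expectation over the $\beta$ coefficients factorises as $\prod_{i<j}\E_{\beta_{ij}}[(-1)^{\beta_{ij}(w_iw_j + x_ix_j + y_iy_j + z_iz_j)}]$ and the expectation over the $\gamma$ coefficients factorises as $\prod_i \E_{\gamma_i}[(-1)^{\gamma_i(w_i + x_i + y_i + z_i)}]$. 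Over $\F_2$ the distinction between $+$ and $-$ in the exponents disappears, so this is exactly the sum bounded in Lemma \ref{lem:tech} with $\omega = \eta = -1$ (i.e. $r=s=2$, which is the permitted degenerate case in that lemma).

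The one subtlety is the cubic part: for a fixed choice of the $\alpha_{ijk}$, the term $(-1)^{\sum \alpha_{ijk}(w_iw_jw_k + x_ix_jx_k + y_iy_jy_k + z_iz_jz_k)}$ is a fixed sign of modulus $1$, so it can only help — each summand is still bounded in absolute value by the corresponding summand with the cubic part removed. Thus for every fixed choice of the degree-$3$ coefficients,
\[
\E_{\beta,\gamma}[\ngap(f)^4] \le 2^{-4n} \sum_{w,x,y,z} \left| \E_\beta[\cdots] \right| \left| \E_\gamma[\cdots] \right| \le 2^{-4n} \cdot 3 \cdot 2^{2n} = 3 \cdot 2^{-2n},
\]
where the middle inequality is Lemma \ref{lem:tech}. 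Since this holds for every fixed value of the $\alpha$ coefficients, it holds a fortiori after averaging over them as well, which gives the statement as phrased (degree $\le 2$ part uniformly random, degree-$3$ part arbitrary or random). I would spell out the reduction to Lemma \ref{lem:tech} by matching the exponents term-by-term: the quadratic monomial $w_iw_j$ pairs with $\alpha_{ij}$ in the lemma, the linear monomial $w_k$ with $\beta_k$, and the four strings $w,x,y,z$ play the same roles.

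I do not expect a genuine obstacle here — the real work is already done in Lemma \ref{lem:tech}, and what remains is a clean bookkeeping argument. The only place to be slightly careful is to confirm that the hypothesis ``if $s=2$ then $r=2$'' of Lemma \ref{lem:tech} is satisfied: here $s=2$ and indeed $r=2$, so we are fine. A secondary point worth a sentence is that the fourth moment, not just the absolute value of the fourth moment, is what we need for the Paley–Zygmund application in Theorem \ref{thm:deg3avg}; but since $\ngap(f)^4 \ge 0$ pointwise, $\E_f[\ngap(f)^4]$ is already real and nonnegative, and bounding it by $3 \cdot 2^{-2n}$ is exactly what is claimed. So the proof is essentially: expand the fourth power, factor the coefficient-averages, bound the cubic contribution trivially by $1$, apply Lemma \ref{lem:tech}, and collect the powers of two.
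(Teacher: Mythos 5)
Your proposal is correct and matches the paper's own proof: expand the fourth power over quadruples $w,x,y,z$, factor out the degree-$3$ part as a unit-modulus sign, factorise the $\beta,\gamma$ averages, and invoke Lemma \ref{lem:tech} (in its $r=s=2$ case, where the sign differences in the exponents vanish mod $2$) to bound the sum by $3\cdot 2^{2n}$, giving $3\cdot 2^{-2n}$ after dividing by $2^{4n}$. No further comment is needed.
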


\begin{proof}
We have
\beas
\E_f\left[\ngap(f)^4\right] &=& \E_f\left[ \left(\E_{x \in \{0,1\}^n} (-1)^{f(x)}\right)^4 \right]\\
&=& \E_f\left[ \E_{w,x,y,z \in \{0,1\}^n} (-1)^{f(w)+f(x)+f(y)+f(z)} \right]\\
  &=& \E_{w,x,y,z \in \{0,1\}^n} \E_f  \left[(-1)^{f(w)+f(x)+f(y)+f(z)} \right].
\eeas
Let $f_{\le 2}$ and $f_{>2}$ be the parts of $f$ of degree $\le 2$ and degree $>2$, respectively. Then
\[ \E_f  \left[(-1)^{f(w)+f(x)+f(y)+f(z)} \right] = (-1)^{f_{>2}(w) + f_{>2}(x) + f_{>2}(y) + f_{>2}(z)} \E_{f_{\le 2}} \left[(-1)^{f_{\le 2}(w)+f_{\le 2}(x)+f_{\le 2}(y)+f_{\le 2}(z)} \right]. \]
Expanding $f_{\le 2}(x) = \sum_{i,j} \beta_{ij} x_i x_j + \sum_k \gamma_k x_k$, the expectation in this expression can be written as
\[ \E_{\beta,\gamma} \left[ (-1)^{\sum_{i,j} \beta_{ij}(w_i w_j + x_i x_j + y_i y_j + z_i z_j) + \sum_k \gamma_k(w_k+x_k + y_k + z_k)} \right]. \]
So
\beas \E_f\left[\ngap(f)^4\right] &\le& \frac{1}{2^{4n}}\sum_{w,x,y,z \in \{0,1\}^n}  \left|\E_{\beta}\left[ (-1)^{\sum_{i,j} \beta_{ij}(w_i w_j + x_i x_j + y_i y_j + z_i z_j)}\right] \E_\gamma\left[(-1)^{\sum_k \gamma_k(w_k+x_k + y_k + z_k)} \right] \right|\\
&\le& 3\cdot 2^{-2n}
\eeas
by Lemma \ref{lem:tech}.
\end{proof}

Next we consider the partition function (\ref{eq:partition}) of the Ising model for random weights $w_{ij},v_k \in \{0,\dots,7\}$, evaluated at $\omega =  e^{i\pi / 8}$. To make contact with the previous analysis, and with IQP circuits, we rewrite the expression (\ref{eq:partition}) for $Z(\omega)$ as
\beas
Z(\omega) &=& \sum_{x \in \{0,1\}^n} \omega^{\sum_{i<j} w_{ij} (1-2x_i)(1- 2x_j) + \sum_{k=1}^n v_k (1-2x_k) }\\
&=& \omega^{\sum_{i<j} w_{ij} + \sum_{k=1}^n v_k} \sum_{x \in \{0,1\}^n} \omega^{4\sum_{i<j} w_{ij} x_i x_j + 2 \sum_{k=1}^n v'_k x_k }, \\
\eeas
%
%where each $w_{ij}$ is picked uniformly at random from $\{0,\dots,15\}$ and
where $v'_k = -v_k - \sum_{j\neq k} w_{jk}$. Observe that $v'_k$ is uniformly random mod 8. Up to an easily computed phase, we can write
\[ Z(\omega) = \sum_{x \in \{0,1\}^n} i^{\sum_{i<j} w_{ij} x_i x_j} e^{(\pi i / 4)( \sum_{k=1}^n v'_k x_k) }. \]
Then it is clear that $Z(\omega)/2^n = \bra{0}^{\otimes n}\mathcal{C}_I\ket{0}^{\otimes n}$ for an IQP circuit $\mathcal{C}_I$ consisting of gates from the set $\{\diag(1,1,1,i), \diag(1,e^{i\pi / 4})\}$. Further, because of the uniformly random choice of $w$ and $v'$, we can consider such a circuit as being formed by picking a random circuit using this gate set, then following it by a random choice of X gates. It follows immediately from Lemma \ref{lem:tech} that $\E_{w,v'} \left[|Z(\omega)|^4 \right] \le 3\cdot 2^{2n}$. Finally, note that we could in fact have picked the $w_{ij}$ weights uniformly from the set $\{0,\dots,3\}$ without affecting the above analysis.

% ------------------------------------------------------------------------------

\end{document}